\newtheorem{thm}{Theorem}[section]
\newtheorem{defn}[thm]{Definition}
\newtheorem{lemma}[thm]{Lemma}
\newtheorem{cor}[thm]{Corollary}
\newcommand{\OA}{$\mathit{OA}$ }
\newcommand{\NA}{$\mathit{NA}$ }
\title{On Sisterhood in the Gale-Shapley Matching Algorithm
\author {
Yannai A. Gonczarowski
\thanks{Einstein Institute of Mathematics,
 Hebrew University, Jerusalem, Israel.
Email: \mbox{yannai@gonch.name}}
\ and Ehud Friedgut
\thanks{Einstein Institute of Mathematics,
Hebrew University, Jerusalem, Israel.
Email: \mbox{ehud.friedgut@gmail.com}}}}
\begin{document}
\date{}
\renewcommand{\thefootnote}{\fnsymbol{footnote}}
\maketitle

\begin{abstract}
Lying in order to manipulate the Gale-Shapley matching algorithm has been
studied in \cite{Dubins-Friedman} and \cite{Gale-Sotomayor-ms-machiavelli} and
was shown to be generally more appealing to the proposed-to side (denoted as
the women in \cite{Gale-Shapley}) than to the proposing side (denoted as men
there). It can also be shown that in the case of lying women, for every woman
who is better-off due to lying, there exists a man who is worse-off.

In this paper, we show that an even stronger dichotomy between the goals
of the sexes holds, namely, if no woman is worse-off then no man
is better-off, while a form of sisterhood between the lying and the
``innocent'' women also holds, namely, if none of the former are worse-off,
then neither is any of the latter.

This paper is based upon an undergraduate (``Amirim'') thesis of the
first author.

\end{abstract}

\section{Background}

\subsection{The Gale-Shapley Algorithm}

In order to standardize the notation used throughout this article, and
for sake of self-containment, let us quickly recap the scenario introduced in
\cite{Gale-Shapley} and the Gale-Shapley Algorithm introduced there:

Let $W$ and $M$ be equally-sized finite sets of women and men, respectively.
Let each member of these sets have a strict order of preference regarding
the members of the set this member does not belong to.

\begin{defn}
A one-to-one map between $W$ and $M$ is called a matching.
\end{defn}

\begin{defn}
A matching is called unstable under the given orders of preference if there
exist two matched couples $(w, m)$ and $(\tilde{w}, \tilde{m})$ such that
$w$ prefers $\tilde{m}$ over $m$ and $\tilde{m}$ prefers $w$ over $\tilde{w}$.
A matching which is not unstable is called stable.
\end{defn}

\begin{thm}[\cite{Gale-Shapley}]
The following algorithm stops and yields a stable matching between $W$
and $M$ (in particular, such a matching exists), and no stable matching
is better for any man $m \in M$:
The algorithm is divided into steps, to which we shall
refer as ``nights''. On each night, each man serenades under the window of the
woman he prefers most among all women who have not (yet) rejected him, and
then each woman, under whose window more than one man serenades, rejects every
man who serenades under her window, except for the man she prefers most
among these men. The algorithm stops on a night on which no man is rejected by
any woman, and then each woman is matched with whoever has serenaded under her
window on this night.
\end{thm}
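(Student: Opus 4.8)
The plan is to establish the three assertions of the statement in turn: that the algorithm halts, that its output is a stable matching, and that this matching is best-possible for every man. For \textbf{termination}, I would track the set of pairs $(m,w)$ for which $w$ has already rejected $m$. The crucial observation is that a rejected man lowers his sights permanently: since on each night a man serenades his most-preferred woman among those who have \emph{not} yet rejected him, once $w$ rejects $m$ he will never serenade $w$ again. Hence this set of rejected pairs grows strictly on every non-terminal night (each such night features at least one rejection), and since it lives inside the finite set $M \times W$, the algorithm stops after at most $|M|\cdot|W|$ nights.

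Before proving stability I would verify the output is genuinely a one-to-one and onto matching, which rests on a monotonicity invariant: \emph{once a woman is serenaded on some night she is serenaded on every subsequent night, and the man she keeps only improves in her preference over time}. This holds because the man she retains is not rejected, so he serenades her again the following night. I combine this with a counting argument. On the terminal night no woman holds two men, and distinct women hold distinct men. No man $m$ can have been rejected by all of $W$: if he were, every woman would permanently hold a man she strictly prefers to $m$, so all $|W|$ women would be matched to distinct men drawn from $M\setminus\{m\}$, which has only $|M|-1 < |W|$ elements, a contradiction. Thus every man serenades a distinct woman and, since $|W|=|M|$, every woman is matched.

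For \textbf{stability}, suppose the output were unstable, witnessed by matched couples $(w,m)$ and $(\tilde w,\tilde m)$ with $w$ preferring $\tilde m$ to $m$ and $\tilde m$ preferring $w$ to $\tilde w$. Since $\tilde m$ ends up with $\tilde w$, whom he ranks below $w$, and since men serenade in decreasing order of preference, $\tilde m$ must have serenaded $w$ on some earlier night. From that night on, the monotonicity invariant forces $w$ to hold a man she likes at least as much as $\tilde m$; in particular her final partner $m$ is weakly preferred by her to $\tilde m$, contradicting her preferring $\tilde m$ to $m$.

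The crux, and the step I expect to be the main obstacle, is \textbf{man-optimality}, as it is the only place the global structure of \emph{all} stable matchings enters. Call $w$ \emph{achievable} for $m$ if some stable matching pairs them; I would prove by induction on the nights that no man is ever rejected by a woman achievable for him. In the inductive step, consider a night on which $w$ rejects $m$ in favor of a preferred $m'$, and suppose for contradiction that some stable matching $S$ pairs $m$ with $w$; then $S$ pairs $m'$ with some $w' \neq w$. Since $m'$ is serenading $w$, every woman he prefers to $w$ has already rejected him, so by the inductive hypothesis none of them is achievable for $m'$; the delicate point is to apply the hypothesis to $m'$ rather than to $m$, concluding that $m'$ cannot prefer his $S$-partner $w'$ to $w$ (else $w'$, achievable via $S$, would have rejected him). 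Hence $m'$ prefers $w$ to $w'$ while $w$ prefers $m'$ to $m$, so $(w,m')$ blocks $S$ — a contradiction. This closes the induction, and since no man is ever rejected by an achievable woman, each man's final partner is weakly preferred by him to every achievable woman, hence to his partner in any stable matching, which is precisely the claimed man-optimality.
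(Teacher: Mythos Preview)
Your proof is correct and is essentially the classical argument from the original Gale--Shapley paper. Note, however, that the present paper does not give its own proof of this theorem: it is stated in the Background section with a citation to \cite{Gale-Shapley} and taken as known, so there is no paper proof to compare against. Your treatment of termination, of the output being a genuine matching via the ``once serenaded, always serenaded'' monotonicity invariant, of stability, and of man-optimality by induction on nights (showing no man is ever rejected by an achievable woman) all follow the standard route and contain no gaps.
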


Generalizations to this algorithm which include different sizes for $W$ and
$M$, preference lists which do not include all the members of the opposite sex
(``blacklisting''), and one-to-many matchings, are also presented in
\cite{Gale-Shapley} and will be discussed in section \ref{generalizations}.

\subsection{Previous Results}
As quoted above, the stable matching given by the Gale-Shapley algorithm
is optimal (out of all stable matchings) for each man. A somewhat reverse
claim holds regarding the women:

\begin{thm}[\cite{Roth-Sotomayor-book},
 derived from a more general theorem by Knuth]
No stable matching is worse for any woman $w \in W$ than the
matching given by the Gale-Shapley algorithm.
\end{thm}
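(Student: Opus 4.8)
The plan is to deduce this theorem directly from the man-optimality already asserted in the Gale-Shapley theorem above, via a short blocking-pair contradiction. Let $\mu_M$ denote the matching produced by the Gale-Shapley algorithm, and recall that by the Gale-Shapley theorem no stable matching is better for any man than $\mu_M$; that is, each man $m$ weakly prefers his $\mu_M$-partner to his partner in any other stable matching. I would use exactly this fact and nothing deeper.

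Assume toward a contradiction that the conclusion fails: there is some stable matching $\mu$ and some woman $w$ who strictly prefers her $\mu_M$-partner to her $\mu$-partner. Write $m = \mu_M(w)$ and $m' = \mu(w)$, so $w$ strictly prefers $m$ over $m'$, and in particular $m \neq m'$. Now look at $m$ from the man's side: let $w' = \mu(m)$ be his partner under $\mu$. By man-optimality of $\mu_M$, man $m$ weakly prefers his $\mu_M$-partner $w$ to his $\mu$-partner $w'$.

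The crux is to upgrade this weak preference to a strict one so as to produce a genuine blocking pair. Because all preference orders are strict, ``$m$ weakly prefers $w$ to $w'$'' means either $w = w'$ or $m$ strictly prefers $w$ to $w'$. The first case is impossible: if $w' = w$ then $\mu(m) = w$, hence $\mu(w) = m$, i.e. $m' = m$, contradicting $m \neq m'$ established above. Therefore $m$ strictly prefers $w$ to his $\mu$-partner $w'$. Combining this with the assumption that $w$ strictly prefers $m$ to her $\mu$-partner $m'$, the couple $(w, m)$ satisfies precisely the definition of instability for $\mu$: each strictly prefers the other to their assigned partner under $\mu$. This contradicts the stability of $\mu$.

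Hence no such $w$ and $\mu$ exist, which is exactly the claim that $\mu_M$ is weakly worse for every woman than every stable matching. I do not anticipate a serious obstacle here, since the heavy lifting (man-optimality) is imported from the cited theorem; the only point demanding care is the short case analysis turning the weak man-side preference into a strict one, where one must explicitly rule out $w = w'$ before invoking strictness of the preference orders.
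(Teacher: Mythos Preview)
Your argument is correct: the blocking-pair contradiction from man-optimality is the standard way to obtain woman-pessimality, and your handling of the $w=w'$ case is clean. Note, however, that the paper does not supply its own proof of this statement; it is quoted as a background result with attribution to \cite{Roth-Sotomayor-book} (and ultimately to Knuth), so there is no in-paper proof to compare against. Your derivation is precisely the classical one and is fully consistent with the tools already available at that point in the paper.
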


The benefits of the Gale-Shapley matching algorithm for the men have been
demonstrated even further in \cite{Dubins-Friedman},
where it was shown that no man can get a better match by lying
about his preferences (i.e., manipulate the algorithm by declaring a false
order of preference, assuming the algorithm is run according to the ``true''
preferences of all women and all other men) and that no subset of the men
can all get better matches by lying in a coordinated fashion.

These observations lead to the analysis of the profitability of lying by women
in \cite{Gale-Sotomayor-ms-machiavelli}, where it is proven that if more that
one stable matching exists, then at least one woman can get a better match
by lying about her preferences.

It is easy to show that any woman who is better off as a result of someone's
lie (whoever that liar or those liars may be and whatever their lie may be) is
matched (due to the lie) to someone who is now worse off.
(Indeed, if they are both better off, then the original match can not be stable.)
In other words, for every woman who is better-off, some man is worse-off.

As mentioned above, in the next section we will show that an even stronger
dichotomy between the goals of the sexes holds, while a form of sisterhood
between the lying and the ``innocent'' women also holds.

\section{Monogamous Matchings}

\subsection{The Theorem}

Let $W$ and $M$ be equally-sized finite sets of women and men, respectively.
Let each member of these sets be endowed with a strict order of preference
regarding the members of the other set. These will be referred to as ``the true
preferences'', the application of the Gale-Shapley algorithm according to these
preferences will be referred to as \OA (original algorithm), and resulting
matching will be referred to as ``the original matching''.

Assume that a subset of the women, denoted $L$ (for liars) declare
false orders of preference for themselves. Call the application
of the Gale-Shapley algorithm according to these false preferences for
the members of $L$, and according to the real preferences of other
members of $W$ (referred to, henceforth, as innocent) and of all
the members of $M$, \NA, and the resulting matching --- ``the new matching''.

\begin{defn}
A person $p \in W \cup M$ is said to be ``better-off'' (resp.\ ``worse-off'') if
$p$ prefers, according to their true order of preference, their
match according to the new matching (resp.\ the original matching)
over their match according to the original matching (resp.\ the new
matching).
\end{defn}

Let us now phrase our main result for the above conditions.

\begin{thm}\label{monogamous}
Under the above conditions, if no lying woman is worse-off, then:
\begin{enumerate}[(a)]
\item \label{no-sad-woman} No woman is worse-off.
\item \label{no-happy-man} No man is better-off.
\end{enumerate}
\end{thm}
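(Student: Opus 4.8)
The plan is to compare the execution traces of \OA and \NA directly, night by night, under the hypothesis that no lying woman is worse-off. Let me sketch the structure I'd aim for.

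The plan is to reduce part~(\ref{no-happy-man}) to part~(\ref{no-sad-woman}) and then to attack~(\ref{no-sad-woman}) by building, out of the two matchings, a single matching that is stable under the \emph{true} preferences yet strictly better for some man than the original matching, contradicting the man-optimality guaranteed by the Gale-Shapley theorem. Throughout I would write $\mu_O$ and $\mu_N$ for the original and new matchings, let $W^-$ be the set of worse-off women and $M^+$ the set of better-off men, and record the basic fact that any pair blocking $\mu_N$ under the true preferences must consist of a man and a \emph{lying} woman: since $\mu_N$ is stable under the reported preferences, a true-blocking pair can only arise where reported and true preferences disagree, i.e.\ at a liar.

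First I would establish two dual implications. (i) If a woman $w$ is worse-off then her original partner $a=\mu_O(w)$ is better-off: as $w$ is worse-off she is, by hypothesis, innocent, so her reported and true preferences agree; applying stability of $\mu_N$ under the reported preferences to the unmatched pair $(w,a)$, and using that $w$ truly prefers $a$ to her $\mu_N$-partner, forces $a$ to prefer his $\mu_N$-partner to $w$, i.e.\ $a\in M^+$. (ii) If a man $m$ is better-off then his new partner $\mu_N(m)$ is worse-off: this is the dual of the observation already noted in the introduction, now using stability of $\mu_O$ under the true preferences. Implication~(ii) already yields part~(\ref{no-happy-man}) once~(\ref{no-sad-woman}) is known, so it remains to prove $W^-=\emptyset$. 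From~(i) and~(ii), $\mu_O$ maps $W^-$ injectively into $M^+$ and $\mu_N$ maps $M^+$ injectively into $W^-$; counting then gives $|W^-|=|M^+|$ and shows that both matchings restrict to perfect matchings of $S:=W^-\cup M^+$ and of its complement.

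Assuming for contradiction that $W^-\neq\emptyset$, I would define $\nu$ to agree with $\mu_N$ on $S$ and with $\mu_O$ off $S$; by the previous step this is a well-defined perfect matching. The crux, and the main obstacle, is to verify that $\nu$ is stable under the \emph{true} preferences. This splits into four cases according to whether each endpoint of a putative true-blocking pair $(w,m)$ lies in $S$: if both lie outside $S$ the pair would block $\mu_O$; if both lie in $S$ the pair would block $\mu_N$; and the two mixed cases are ruled out by short preference chains combining stability of $\mu_O$ (resp.\ $\mu_N$) with the defining inequality of $M^+$ (resp.\ its complement). In every case in which the woman is worse-off she is, by hypothesis, innocent, so her true and reported preferences agree and a true-block becomes a reported-block of $\mu_N$; this is precisely where the assumption that no lying woman is worse-off is spent. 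Once $\nu$ is shown stable, observe that every man in $M^+$ strictly prefers $\nu=\mu_N$ to $\mu_O$, and since $M^+\neq\emptyset$ this contradicts the man-optimality of $\mu_O$ among matchings stable under the true preferences. Hence $W^-=\emptyset$, giving~(\ref{no-sad-woman}), and then~(\ref{no-happy-man}) follows from~(ii).
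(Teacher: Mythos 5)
Your proof is correct, but it takes a genuinely different route from the paper. The paper never leaves the algorithm's execution trace: it defines a ``rejecter'' (a woman who rejected her new partner during the original run), shows better-off man $\Rightarrow$ his new partner is a rejecter, rejecter $\Rightarrow$ worse-off, and then, via the auxiliary man $B(w)$, that each rejecter produces another rejecter whose relevant rejection happened on a strictly earlier night --- an infinite descent on nights that contradicts finiteness. You instead prove the two dual implications (worse-off woman $\Rightarrow$ her original partner is better-off, using stability of the new matching under the \emph{reported} preferences and the fact that a worse-off woman must be innocent; better-off man $\Rightarrow$ his new partner is worse-off, using stability of the original matching), deduce by counting that both matchings restrict to perfect matchings of $S=W^{-}\cup M^{+}$ and of its complement, splice them into a hybrid matching $\nu$, check stability of $\nu$ under the true preferences by a four-case analysis (which I verified goes through --- the two mixed cases do close up via the transitivity chains you indicate, and the hypothesis that no liar is worse-off is spent exactly where you say it is), and then contradict the man-optimality of the Gale--Shapley outcome. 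This is essentially the Decomposition Lemma of Knuth and Gale--Sotomayor adapted to misreported preferences, and it buys a shorter, more conceptual proof that leans on the quoted man-optimality theorem as a black box. What the paper's more pedestrian night-counting argument buys is robustness: in Section~\ref{generalizations} it is ported almost verbatim to the polygamous setting with quotas and blacklists, where matches are sets, ``better-off'' is only a partial order (note the deliberate asymmetry between Definitions~\ref{weakly-better-off} and~\ref{gain-only-worse-matches}), and your counting and splicing step would require substantially more care. One cosmetic remark: your opening sentence promises a night-by-night comparison of the two runs, which is not what the argument that follows actually does; you should delete it.
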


\subsection{Proofs for a Special Case}\label{special-case}

\begin{defn}
A woman $l \in L$ is said to be lying in a personally-optimal way if, all other
orders of preference being the same, there is no other false order of
preference she can declare which will result in her being even better-off,
i.e. her being matched with a man she (truly) prefers over the man matched
to her by the new matching.
\end{defn}

Theorem \ref{monogamous} is easily provable in the special case that all women
in $L$ lie in a personally-optimal way, as it is possible to show that in this
case, the new matching is stable under the true preferences, and thus, no
woman is worse-off, and no man is better-off, than under the original matching.
Nonetheless, it will be shown below that in some cases, when liars may
coordinate their lies, it is logical to lie
in a non-personally-optimal way, and that in these cases the resulting matching
might be unstable under the true preferences.

It can be noted that if we examine a ``lying game'' between the member of $L$,
where a player's strategy is a declaration of a specific order of preference
for theirself, and the utility for each player is determined by the
position of her new match on their true preference list, then in this
game, the conditions of the special case described above are met if
and only if the set
of lies constitutes a Nash equilibrium.

Roth (private communication, Dec. 2007) suggested the following sketch
of a proof to part \ref{no-sad-woman} of Theorem \ref{monogamous}:
\begin{itemize}
\item If women can do better than to state their true preferences, they can do
so by truncating their preferences.
\item Truncating preferences is the opposite of extending preferences (as
discussed in \cite{Roth-Sotomayor-book} in the context of adding new players)
\item When any woman extends her preferences, it harms the other women.
\end{itemize}

(As already noted, truncating preference lists is discussed in section
\ref{generalizations} below.) While this also proves Theorem \ref{monogamous}
in the special case discussed above (in
which all women lie in a personally-optimal way),
it appears not to prove Theorem \ref{monogamous} itself (i.e., with no
additional assumptions), as while it is true that if a woman
can do better than to state her true preferences then she can do better by
truncating them
(and even more so, there always exists a truncation of her true preferences
which constitutes a personally-optimal lie for her), it turns out that there
may exist some man whom she can secure for
herself by submitting false preferences, but not by truncating her true
preferences.
This will be illustrated by an example below.

\subsection{When a Lie Needs Not be Optimal}

As was shown 
in section \ref{special-case},
Theorem \ref{monogamous} is easily provable if each
lying woman lies in a personally-optimal way, or if the new
matching is stable under everyone's true preferences.

Before continuing to the main proof, let us give an example of a scenario
with the following properties:
\begin{enumerate}
\item No woman can do better by lying alone (while all others tell the truth).
\item In every conspiracy by more than one woman to lie so that none of them
are worse-off and at least one is better-off, there exists a woman who does not
lie in a personally-optimal way. In other words, it is logical to lie in a
non-optimal way.
\item The resulting matching is not stable under the true preferences and can
not be achieved by simply truncating women's true preference lists (even if
that is allowed).
\end{enumerate}

This example includes four women ($w_1, ..., w_4$) and four men
($m_1, ..., m_4$). The orders of preference for the women fulfill:

\begin{itemize}
\item $w_1$: First choice: $m_3$, second choice: $m_1$.
\item $w_2$: First choice: $m_3$, second choice: $m_1$.
\item $w_3$: Prefers $m_2$ over $m_1$ and prefers $m_1$ over $m_3$.
\item $w_4$: Any order of preference.
\end{itemize}

The orders of preference for the men fulfill:

\begin{center}
\begin{tabular}{ c c c c c }
\toprule
Man   & 1st choice & 2nd choice & 3rd choice & 4th choice \\
\midrule
$m_1$ & $w_1$      & $w_3$      & $w_2$      & $w_4$      \\
$m_2$ & $w_2$      & $w_3$      & any        & any        \\
$m_3$ & $w_3$      & $w_2$      & $w_1$      & $w_4$      \\
$m_4$ & $w_1$      & $w_4$      & any        & any        \\
\bottomrule
\end{tabular}
\end{center}

Let us examine \OA:

\begin{center}
\begin{tabular}{ c c c c c }
\toprule
\backslashbox{Night}{Window} & $w_1$      & $w_2$ & $w_3$ & $w_4$ \\
\midrule
1                            & $m_4, m_1$ & $m_2$ & $m_3$ &       \\
2                            & $m_1$      & $m_2$ & $m_3$ & $m_4$ \\
\bottomrule
\end{tabular}
\end{center}

It is clear from this examination that any conspiring subset of the women
wishing to make a difference
(in the scenario which does not allow truncating preference lists)
has to include $w_1$. Also, it can be verified that $w_1$ can not lie alone
(while all others tell the truth) and become better-off.
Indeed, in order to become better-off, $w_1$ has to be matched, under the new
matching, with $m_3$, and since he prefers $w_2$ over her, this entails
his rejection by $w_2$, but since he is the first choice of $w_2$, this
means $w_2$ has to lie and declare that she prefers another man over him.
Let us assume, then, that $L = \{w_1, w_2\}$. (It can be verified that the
result would not change even if we admit more women to $L$.) 

It can easily be verified that, given the true orders of preference for
everyone except $w_1$ and $w_2$, there exists exactly one combination of false
orders of preference for these two women (up to changes in their preferences
regarding men who will not reach their windows as long as everyone else tells
the truth) which will cause them both to be better-off:

\begin{itemize}
\item $w_1$ must declare she prefers $m_3$ over $m_4$ and $m_4$ over $m_1$.
\item $w_2$ must declare she prefers $m_1$ over $m_3$ and $m_3$ over $m_2$.
\end{itemize}

Let us examine \NA under these false orders of preference:

\begin{center}
\begin{tabular}{ c c c c c }
\toprule
\backslashbox{Night}{Window} & $w_1$      & $w_2$      & $w_3$      & $w_4$ \\
\midrule
1                            & $m_4, m_1$ & $m_2$      & $m_3$      &       \\
2                            & $m_4$      & $m_2$      & $m_1, m_3$ &       \\
3                            & $m_4$      & $m_2, m_3$ & $m_1$      &       \\
4                            & $m_4$      & $m_3$      & $m_1, m_2$ &       \\
5                            & $m_4$      & $m_1, m_3$ & $m_2$      &       \\
6                            & $m_3, m_4$ & $m_1$      & $m_2$      &       \\
7                            & $m_3$      & $m_1$      & $m_2$      & $m_4$ \\
\bottomrule
\end{tabular}
\end{center}

Indeed, $w_1$ and $w_2$ are both better-off (and, as Theorem \ref{monogamous}
states, no other woman is worse-off, and the even innocent $w_3$ is better-off
as well) however this lie is clearly not personally-optimal for $w_2$,
since she can declare
any order of preference under which her first choice is $m_3$, and become even
better-off by being matched with him, but if she tries to do that (either by
telling the truth, or by lying), then $w_1$ becomes worse-off and in this case
it is better for $w_1$ to tell the truth which, as stated earlier, will cause
the new matching to be identical to the old matching, and hence will cause
$w_2$ to be matched with $m_2$, ending up in a worse situation than had she
lied the above-described non-personally-optimal lie.

In the above-described lying game between $w_1$ and $w_2$, the utility for
each of the players given these strategies is higher than her utility in
any Nash equilibrium, and this is the only pair of strategies
(up to the degrees of freedom discussed above) with this property.

It should be noted that the resulting new matching is neither stable under
the original preferences (as $w_2$ and $m_3$ truly prefer each other over
their respective matches) nor can it be achieved by simply truncating the
preference lists of $w_1$ and $w_2$ (as any such truncation will result in
$w_2$ either having a blank list or a list with $m_3$ as her first choice),
giving, as promised, an example where the above proofs do not hold.

\subsection{A General Proof}

Before proving the general case of Theorem \ref{monogamous}, let us first
introduce a few notations: For each person $p \in W \cup M$, let us denote
the person from the opposite sex matched with $p$ under the original matching
(resp.\ new matching) with $O(p)$ (resp.\ $N(p)$).

\begin{defn}
A woman $w \in W$ is said to be a rejecter if she rejected $N(w)$ during \OA.
In this case, let the man who serenaded under her window on the night of \OA
on which she rejected $N(w)$, but whom she did not reject on that night (and
whom, therefore, she prefers over $N(w)$), be denoted $B(w)$.
\end{defn}

\begin{lemma}\label{didnt-reach-then-prefers-match}
If a man $m \in M$ never serenaded under the window of a woman $w \in W$ during
\NA (resp.\ \OA), then $m$ prefers $N(m)$ (resp.\ $O(m)$) over $w$.
\end{lemma}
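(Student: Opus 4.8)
The plan is to prove Lemma~\ref{didnt-reach-then-prefers-match} directly from the mechanics of the Gale-Shapley algorithm, exploiting the fact that each man proposes (serenades) in strictly decreasing order of his true preference. I will argue the \NA case in detail; the \OA case is identical word-for-word, since the men's true preferences govern their serenading behavior in both runs of the algorithm (the liars only alter the \emph{women's} rejection decisions, not the order in which men approach windows).

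First I would recall the key monotonicity property built into the algorithm: on each successive night, a man serenades under the window of the woman he most prefers among all women who have \emph{not yet} rejected him. Consequently, the sequence of women under whose windows a fixed man $m$ serenades, read night by night, is strictly decreasing with respect to $m$'s true order of preference — each time $m$ is rejected he moves on to his next-most-preferred remaining woman, who is necessarily less preferred than the one who just rejected him. In particular, $N(m)$ (the woman $m$ is matched with at termination) is the last, and hence least preferred, woman under whose window $m$ ever serenaded during \NA.

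Now suppose $m$ never serenaded under the window of $w$ during \NA. I would argue by contradiction: assume $m$ prefers $w$ over $N(m)$. Since $m$ serenades under windows in strictly decreasing order of preference and eventually reaches $N(m)$, and since $w$ is strictly preferred to $N(m)$, the window of $w$ would have had to be visited by $m$ strictly before he ever reached $N(w)$'s window — indeed, a man only advances past a more-preferred woman after she has rejected him on some night, so to ever serenade under $N(m)$'s window he must first have serenaded under, and been rejected by, every woman he strictly prefers to $N(m)$, including $w$. This contradicts the assumption that $m$ never serenaded under $w$'s window, completing the proof.

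I expect the only subtlety — and the step I would take care to state cleanly — is the precise justification that ``$m$ eventually serenades under $N(m)$'s window'' together with ``$m$ serenades under the window of every woman he prefers to $N(m)$ before reaching $N(m)$.'' This rests on the invariant that the only way a man abandons a woman's window is by being rejected by her, and that upon rejection he advances to the next woman down his \emph{true} list. Making this invariant explicit guards against the implicit worry that $m$ might ``skip over'' $w$; he cannot, because his progression through his preference list is dictated entirely by rejections and proceeds one strict step at a time. Once this is pinned down, the contrapositive delivers exactly the statement of the lemma, and the \OA case follows by the identical argument applied to the original run.
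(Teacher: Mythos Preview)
Your proof is correct and follows essentially the same approach as the paper's: both arguments rest on the single observation that a man serenades in strictly decreasing order of his true preference, so if he never reaches $w$'s window he must have stopped at someone he prefers to $w$. (Minor typo: ``$N(w)$'s window'' in your contradiction paragraph should read ``$N(m)$'s window''.)
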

\begin{proof}
Since $m$ approaches women according to his own order of preference, and since
he never reached $w$'s window during \NA (resp.\ \OA), then he must have
ended up being matched, under the new (resp.\ old) matching, to a woman he
prefers over $w$, and by definition, that woman is $N(m)$ (resp.\ $O(m)$).
\end{proof}

By the following Lemma, we need only prove part \ref{no-happy-man} of
Theorem \ref{monogamous}.

\begin{lemma}
If a woman $w \in W$ is worse-off, then $O(w)$ is better-off.
\end{lemma}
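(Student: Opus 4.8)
The goal is to take a worse-off woman $w$ and show that $m := O(w)$ is better-off, i.e.\ that $m$ prefers $N(m)$ over $O(m) = w$. My first move is to record a consequence of the theorem's standing hypothesis: since no lying woman is worse-off, the worse-off woman $w$ must be innocent, and therefore she participates in \NA according to her \emph{true} order of preference. This is the one place where the hypothesis enters, and it is essential, since the entire argument rests on comparing what happens to $w$ during \NA against her true preferences.

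The plan is then to show that $m$ never serenades under $w$'s window during \NA, after which Lemma \ref{didnt-reach-then-prefers-match} immediately yields that $m$ prefers $N(m)$ over $w$, i.e.\ that $m$ is better-off, as desired. To establish that $m$ never reaches $w$, I would first invoke the monotonicity property of the algorithm: during \NA the man a woman tentatively holds can only improve in her own declared order of preference from night to night, because on each night she keeps her most-preferred serenader and the man she keeps continues to serenade her the following night. Hence the held man is non-decreasing in her preference, and in particular $N(w)$ is the man she most prefers among \emph{all} men who ever serenaded under her window during \NA; since $w$ is innocent, this comparison is with respect to her true preferences. Arguing by contradiction, if $m$ had at some point serenaded under $w$'s window during \NA, this would force $w$ to (weakly) prefer $N(w)$ over $m = O(w)$, contradicting the fact that, being worse-off, she strictly prefers $O(w)$ over $N(w)$. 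Thus $m$ never serenades under $w$'s window during \NA, and Lemma \ref{didnt-reach-then-prefers-match} completes the argument.

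I expect the only genuine subtlety to be the bookkeeping around which preference order is in play: the monotonicity of the held suitor is measured in a woman's \emph{declared} preferences, whereas ``worse-off'' is defined by her \emph{true} preferences, so the deduction of a contradiction is legitimate only once we know that $w$ declares truthfully --- which is precisely why reducing to $w \in W \setminus L$ via the hypothesis is the crucial first step. Everything past that point is a direct application of the already-established Lemma \ref{didnt-reach-then-prefers-match}, and, combined with part \ref{no-happy-man} of Theorem \ref{monogamous}, its contrapositive delivers part \ref{no-sad-woman}.
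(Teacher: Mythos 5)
Your proof is correct and follows essentially the same route as the paper's: use the hypothesis to conclude that $w$ is innocent and hence declares truthfully in \NA, observe that $N(w)$ is preferred by $w$ to every man who ever serenaded under her window during \NA, deduce that $O(w)$ never did so, and finish with Lemma \ref{didnt-reach-then-prefers-match}. The only difference is that you spell out the monotonicity of the held suitor, which the paper asserts implicitly.
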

\begin{proof}
Since $w$ is worse-off, $w$ prefers $O(w)$ over $N(w)$ according to her true
order of preference. 
Since it is given that no liar is worse-off, then $w$ is not a liar,
and therefore she declared that she prefers $O(w)$ over $N(w)$ also during \NA.
Therefore, since $w$ prefers $N(w)$
over any other man who serenaded under her window
during \NA, it follows that $O(w)$ could not have serenaded under her window
on any night
during \NA. Thus, by Lemma \ref{didnt-reach-then-prefers-match}, $O(w)$ prefers
$N(O(w))$ over $w=O(O(w))$, making $O(w)$ better-off.
\end{proof}

\begin{lemma}\label{better-off-then-new-rejecter}
If a man $m \in M$ is better-off, then $N(m)$ is a rejecter.
\end{lemma}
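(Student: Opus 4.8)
The plan is to show directly that the woman $w = N(m)$ satisfies the defining condition of a rejecter, namely that she rejected $N(w)$ during \OA. Since $N$ is a one-to-one matching we have $N(w) = m$, so the goal reduces to showing that $w$ rejected $m$ on some night of \OA. The only hypothesis available is that $m$ is better-off, which by definition means that $m$ prefers $N(m) = w$ over $O(m)$ according to his order of preference (true, since men do not lie); in particular $w \neq O(m)$.

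The first step is to argue that $m$ must have serenaded under $w$'s window at some point during \OA. This is precisely the contrapositive of Lemma \ref{didnt-reach-then-prefers-match}: were it the case that $m$ never reached $w$'s window during \OA, that lemma would force $m$ to prefer $O(m)$ over $w$, contradicting the fact that he is better-off. Hence $m$ did serenade under $w$'s window on some night of \OA.

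The second step is to deduce from this that $w$ rejected $m$. Here I would invoke the basic monotonicity of the men-proposing algorithm: a man serenades the same woman night after night until she rejects him, and once rejected he moves on to a woman he prefers strictly less and never returns. Since $m$ serenaded under $w$'s window on some night but his final \OA match is $O(m) \neq w$, the only way he could have left $w$'s window is to have been rejected by her. Thus $w$ rejected $m = N(w)$ during \OA, which is exactly the statement that $w = N(m)$ is a rejecter.

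I do not anticipate a serious obstacle; the argument is a short chain combining the contrapositive of Lemma \ref{didnt-reach-then-prefers-match} with the ``stay until rejected'' structure of the algorithm. The only point requiring care is the second step, where one must justify that a man abandons a woman's window \emph{only} upon being rejected, so that ``serenaded but not finally matched'' genuinely implies ``rejected''; this is immediate from the description of \OA but deserves an explicit sentence.
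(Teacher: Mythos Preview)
Your proof is correct and follows essentially the same route as the paper's: both argue that since $m$ prefers $N(m)$ over $O(m)$ and men propose in decreasing order of preference, $m$ must have visited $N(m)$'s window during \OA\ and been rejected there before ending up with $O(m)$. The only cosmetic difference is that you package the first step as the contrapositive of Lemma~\ref{didnt-reach-then-prefers-match}, whereas the paper argues it directly from the fact that $m$ serenades under $O(m)$'s window on the last night of \OA; the content is the same.
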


\begin{proof}
Since $m$ is better-off, $m$ prefers $O(m)$ over $N(m)$.
Therefore, since $m$ approaches women according to his own order of preference,
$m$ would not serenade under $O(m)$'s window before first having been
rejected by $N(m)$.
Now, since $m$ serenades under $O(m)$'s window during the last night of \OA,
it follows that he must have been rejected by $N(m)$ on some
earlier night during \OA. Since $N(m)$ rejected $m=N(N(m))$ during
\OA, then she is a rejecter.
\end{proof}

\begin{lemma}\label{rejecter-then-worse-off}
If a woman $w \in W$ is a rejecter, then she is worse-off.
\end{lemma}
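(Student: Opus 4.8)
The plan is to exploit the one-sided monotonicity of the Gale--Shapley algorithm: during \OA, the man tentatively held by any fixed woman can only improve, with respect to her true order of preference, as the nights progress. Since \OA is run under everyone's true preferences, the rejecter $w$ acts according to her true order throughout \OA, so every comparison we read off from her behaviour during \OA is a comparison in her true preferences. In particular, the definition of a rejecter already records one such comparison: $w$ truly prefers $B(w)$ over $N(w)$, since on the very night she rejected $N(w)$ she retained $B(w)$.

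First I would establish that $w$ truly prefers $O(w)$ over $B(w)$ (or is matched to $B(w)=O(w)$). After the night on which $w$ keeps $B(w)$ while rejecting $N(w)$, the man $B(w)$ serenades under $w$'s window again on the following night, since a man keeps returning to a woman until she rejects him (she remains his most-preferred woman among those who have not rejected him). On each night $w$ retains whichever serenader she most prefers, so the man she holds can only move up her true preference list from night to night; hence the man she holds on the final night of \OA, namely $O(w)$, is weakly preferred by her (truly) to $B(w)$.

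Chaining the two comparisons yields, in $w$'s true preferences, that $O(w)$ is weakly preferred to $B(w)$ and $B(w)$ is strictly preferred to $N(w)$, so $w$ truly prefers her original match $O(w)$ over her new match $N(w)$. By the definition of ``worse-off'' this is exactly the conclusion that $w$ is worse-off, completing the argument.

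The only real content is the monotonicity step, so the main obstacle is to state cleanly why, once $B(w)$ has been retained, $w$'s held partner never deteriorates: this rests on the facts that a rejected man never returns and that $B(w)$ (until rejected) re-serenades $w$ on each subsequent night, so $w$ always has at least $B(w)$ available to keep and therefore ends \OA holding someone she likes at least as much. I would note that, unlike the surrounding lemmas, this one does not invoke the hypothesis that no lying woman is worse-off; it is a purely structural consequence of $w$ having been a rejecter during \OA.
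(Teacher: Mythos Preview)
Your argument is correct and follows essentially the same route as the paper. The paper's one-line proof (``by induction and transitivity, $w$ prefers $O(w)$ over any man she rejected during \OA'') is exactly the monotonicity of the tentatively-held partner that you spell out; you simply make the intermediate step through $B(w)$ explicit rather than invoking the general fact that $O(w)$ dominates every man $w$ ever rejected. Your closing observation that the hypothesis ``no lying woman is worse-off'' is not used here is also accurate and matches the paper's proof.
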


\begin{proof}
By induction and transitivity,
$w$ prefers $O(w)$ over any man she rejected
during \OA and therefore, being a rejecter, over $N(w)$, and is thus
worse-off.
\end{proof}

\begin{lemma}\label{rejecter-then-better-man-prefers-new-over-her}
If a woman $w \in W$ is a rejecter, then $B(w)$ prefers $N(B(w))$ over $w$.
\end{lemma}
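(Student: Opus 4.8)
The plan is to reduce the statement entirely to Lemma~\ref{didnt-reach-then-prefers-match}. That lemma says that if a man never serenades under a woman's window during \NA then he prefers his new match to her; so if I can establish that $B(w)$ never reaches $w$'s window during \NA, the conclusion that $B(w)$ prefers $N(B(w))$ over $w$ follows immediately. Thus the whole argument should come down to proving the single assertion that $B(w)$ does not serenade under $w$ during the new run of the algorithm.

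The first substantive step, and the one I expect to carry the most weight, is to observe that a rejecter must be innocent. By Lemma~\ref{rejecter-then-worse-off} a rejecter $w$ is worse-off, and the standing hypothesis of Theorem~\ref{monogamous} is that no lying woman is worse-off; hence $w \notin L$, and $w$ declares her true order of preference during \NA exactly as she does during \OA. This is the crucial leverage point, since it is what licenses me to compare facts about $w$'s genuine preferences across the two different executions of the algorithm.

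With truthfulness in hand I would finish as follows. Because $w$ is innocent, during \NA she retains, on each night, the man she most prefers (by her true order) among those serenading her, and she ends matched to $N(w)$; consequently she truly prefers $N(w)$ over every other man who ever serenades under her window during \NA. On the other hand, by the definition of $B(w)$ she truly prefers $B(w)$ over $N(w)$. Had $B(w)$ ever serenaded under $w$'s window in \NA, she would have held him in preference to $N(w)$, contradicting the previous sentence; therefore $B(w)$ never reaches $w$ during \NA, and Lemma~\ref{didnt-reach-then-prefers-match} concludes the proof.

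The one subtlety to be careful about is the consistency of $w$'s preferences between the two runs: the comparison ``$w$ prefers $B(w)$ over $N(w)$'' is read off from the \OA execution, while ``$w$ prefers $N(w)$ over her other \NA serenaders'' is read off from the \NA execution, so the argument genuinely relies on $w$ being truthful in \emph{both}. This is precisely why establishing her innocence at the outset is indispensable rather than a formality.
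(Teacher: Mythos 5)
Your proof is correct and follows essentially the same route as the paper's: both deduce from Lemma~\ref{rejecter-then-worse-off} and the hypothesis that no liar is worse-off that $w$ is innocent, then use her truthfulness in \NA{} together with $w$ preferring $B(w)$ over $N(w)$ to conclude that $B(w)$ never reached her window in \NA, and finish by Lemma~\ref{didnt-reach-then-prefers-match}. Your explicit remark about needing truthfulness in both runs is a fair elaboration of what the paper leaves implicit.
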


\begin{proof}
By Lemma \ref{rejecter-then-worse-off}, $w$ is not a liar, and as such,
her order of preference during \NA is her true order of preference.
Since $w$ is a rejecter, $w$ prefers $B(w)$ over $N(w)$. Therefore,
as $w$ truly prefers $N(w)$ over any other man who serenaded under her window
during \NA, it follows that $B(w)$ could not have serenaded under $w$'s
window during \NA.
Thus, by Lemma \ref{didnt-reach-then-prefers-match}, $B(w)$ prefers $N(B(w))$
over $w$.
\end{proof}

\begin{lemma}\label{order-lemma}
If a woman $w \in W$ is a rejecter, then
\begin{enumerate}
\item $N(B(w))$ is a rejecter.
\item During \OA, $w$ rejected $N(w)$ on a strictly later night than the night on which $N(B(w))$ rejected $B(w)$.
\end{enumerate}
\end{lemma}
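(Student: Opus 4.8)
The plan is to trace the progress of the man $B(w)$ through the nights of \OA. Writing $b = B(w)$ for brevity, both assertions concern the woman $N(b) = N(B(w))$, so I would begin by recording what is already known about $b$: since $w$ is a rejecter, Lemma \ref{rejecter-then-better-man-prefers-new-over-her} tells us that $b$ prefers $N(b)$ over $w$.

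The single observation that drives everything is that, by the definition of $B(w)$, the man $b$ serenaded under $w$'s window on the very night of \OA on which $w$ rejected $N(w)$. Because each man serenades the women in his own (true) order of preference during \OA, and since $b$ prefers $N(b)$ over $w$, he cannot have appeared under $w$'s window without first having been rejected by every woman he ranks above $w$ --- in particular by $N(b)$. I would make this precise by noting that a man's window assignment moves strictly down his preference list each time he is rejected, so there is a well-defined, strictly earlier night on which $N(b)$ rejected $b$.

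Part 1 then falls out immediately: since $N(b)$ rejected $b = N(N(b))$ during \OA, the woman $N(b) = N(B(w))$ satisfies the definition of a rejecter. For part 2, the same trajectory argument supplies the timing: $N(b)$ rejected $b$ strictly before $b$ ever appeared under $w$'s window, and $b$ did appear there on the night $w$ rejected $N(w)$; hence that night is strictly later than the night on which $N(B(w))$ rejected $B(w)$.

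The only place demanding care --- the ``main obstacle'' such as it is --- is the \emph{strictness} in part 2. It is not enough to observe that $b$ was rejected by $N(b)$ at some point before arriving at $w$; I must rule out the degenerate possibility that these two events fall on the same night. This is handled by the fact that after being rejected a man advances to a strictly less-preferred window only on the \emph{following} night, so the rejection by $N(b)$ and the appearance under $w$'s window genuinely occur on distinct nights, with the former strictly preceding the latter.
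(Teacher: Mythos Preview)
Your proposal is correct and follows essentially the same route as the paper's own proof: invoke Lemma~\ref{rejecter-then-better-man-prefers-new-over-her} to get that $B(w)$ prefers $N(B(w))$ over $w$, combine this with the fact that $B(w)$ serenaded under $w$'s window on the night $w$ rejected $N(w)$, and conclude both that $N(B(w))$ rejected $B(w)$ (hence is a rejecter) and that this rejection occurred strictly earlier. Your explicit discussion of why the inequality is strict is a bit more detailed than the paper's, but the argument is the same.
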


\begin{proof}
Throughout this proof, for the sake of conciseness, wherever we discuss
whether a rejection or serenading occurred, or when it occurred, we always refer
to \OA. Since $w$ is a rejecter, then
according to Lemma \ref{rejecter-then-better-man-prefers-new-over-her},
$B(w)$ prefers $N(B(w))$ over $w$. Also note that by definition of $B(w)$,
$B(w)$ serenaded under $w$'s window on some night.
By these two observations, and since $B(w)$ approaches women according to his
own order of preference,
$N(B(w))$ rejected $B(w)$, and did so on a night strictly earlier than any night
on which $B(w)$ serenaded under $w$'s window.
Since $N(B(w))$ rejected $B(w)=N(N(B(w)))$, then she is a rejecter.
Moreover, since, by definition of $B(w)$, $w$ rejected $N(w)$ on a
night on which $B(w)$ serenaded
under her window as well, then this night is a strictly later night than the
night on which $N(B(w))$ rejected $B(w)$.
\end{proof}

To complete the proof of Theorem \ref{monogamous}, assume for contradiction
that there exists a better-off man $m \in M$. Let us denote $w_1 = N(m)$, and
by Lemma \ref{better-off-then-new-rejecter}, she is a rejecter.
Now, for each $i \in \mathbb{N}$, assume by induction that
$w_i$ is a rejecter and set $w_{i+1} = N(B(w_i))$. By Lemma \ref{order-lemma},
$w_{i+1}$ is a rejecter. Moreover, by that lemma, during \OA, $w_i$ rejected $N(w_i)$ on a night strictly later than the night on
which $w_{i+1}$ rejected $B(w_i)=N(N(B(w_i)))=N(w_{i+1})$.

From finiteness of $W$, there must exist $i < j$ such that $w_i=w_j$, but by induction, since $i<j$, then during \OA,
$w_i$ rejected $N(w_i)$ on a night strictly later than the night on which $w_j$ rejected $N(w_j)$ --- a contradiction.
\qed

\begin{cor}
Under the conditions of Theorem \ref{monogamous}, every better-off woman,
is matched to a worse-off man.
\end{cor}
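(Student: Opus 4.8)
The plan is to fix an arbitrary better-off woman $w$ and to show directly that her new match, the man $m = N(w)$, is worse-off. Since the new matching is one-to-one, the equation $N(w) = m$ is equivalent to $N(m) = w$, so the claim amounts to showing that $m$ truly prefers $O(m)$ over $N(m) = w$.

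First I would invoke part \ref{no-happy-man} of Theorem \ref{monogamous}, which, under the standing hypothesis that no lying woman is worse-off, guarantees that no man is better-off; in particular $m$ is not better-off, and so $m$ does not truly prefer $N(m)$ over $O(m)$. Because all preferences are strict, this leaves only two possibilities for $m$: either his match is unchanged ($O(m) = N(m)$), or he is worse-off (he truly prefers $O(m)$ over $N(m)$).

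The one remaining step is to rule out that $m$'s match is unchanged. If $O(m) = N(m)$, then since $N(m) = w$ we would obtain $O(m) = w$, i.e.\ $O(w) = m$; combined with $N(w) = m$ this forces $O(w) = N(w)$, so $w$'s match would be unchanged as well, contradicting the assumption that $w$ is better-off (which, by strictness of preferences, requires $N(w) \neq O(w)$). Hence $m$ must be worse-off, which is exactly the asserted conclusion.

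I expect no substantial obstacle here: the content of the corollary is essentially already packaged inside part \ref{no-happy-man} of the theorem, and indeed the same conclusion could alternatively be reached from the stability observation noted in the introduction. The only care required is the routine bookkeeping that separates ``not better-off'' from ``worse-off,'' which is handled by the single observation that a person whose match is unchanged cannot be paired with someone whose match \emph{did} change.
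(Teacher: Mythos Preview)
Your argument is correct. The paper states this corollary immediately after completing the proof of Theorem~\ref{monogamous} and gives no separate proof, so the intended derivation is precisely the one you supply: invoke part~\ref{no-happy-man} to conclude that $m=N(w)$ is not better-off, and then rule out the case $O(m)=N(m)$ by the bijectivity of the matching, forcing $m$ to be worse-off.
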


\section{Generalizations}
\label{generalizations}

\subsection{Polygamous Matchings}
In \cite{Gale-Shapley}, a one-to-many version of the algorithm was specified
and proposed as a way to assign students to colleges (on each algorithm-step,
the students apply to their favorite not-yet-rejected-by college, and then
each college rejects all applicants except for the most preferred ones,
according to the quota of this college) and proved that the resulting matching
is stable and that it is optimal (within all stable matches)
for each applicant. It can also be shown that it is the worst (within all
stable matches) for each college.

In \cite{Roth}, it was made public that the assignment of medical interns to
hospitals in the USA has been done using a similar algorithm since 1951,
however, in this algorithm the roles were switched and the hospitals were the
'proposers' and the resulting matching is optimal (within all stable matches)
for each hospital and worst for each intern.

In this subsection, we will prove a generalization of Theorem \ref{monogamous}
for these scenarios. In order to ease the transition from the previous section,
we will maintain the notation of women and men, and refer to the above
scenarios as ``the polygamous scenarios''. Let us redefine our notation for these
scenarios:

Let $W$ and $M$ be finite lists of women and men, respectively, and let each
person be endowed, as before, with a strict order of preference with regards
to the members of the opposite sex. For each person $p \in W \cup M$, define
$n_p$ to be the ``quota'' of this person, i.e., the amount of spouses from the
opposite sex this person seeks. As our goal is the generalization of both
algorithms described above, the reader may assume, for ease of readability,
that either all women are monogamous
($\forall w \in W: n_w = 1$) or all men are monogamous
($\forall m \in M: n_m = 1$), however the rest of this paper holds verbatim even if
this is not the case.
(If both the women and men are monogamous we are reduced to the scenario given
in the previous section.) We will also assume for now that
$\sum_{w \in W} n_w = \sum_{m \in M} n_m$. (This guarantees that when the
algorithm stops, each person $p$ is matched with exactly $n_p$ people of
the opposite sex --- this reduces in the monogamous case to $W$ and $M$
being of identical size.)

\begin{defn}
A map between W and M, mapping each $p \in W \cup M$ to
exactly $n_p$ members of the set $p$ does not belong to, is
called a matching.
\end{defn}

The definition of instability of a matching requires another detail which
was inferred from the other requirements in the monogamous scenario:

\begin{defn}\label{matching-quotas-instability}
A matching is called unstable under the given orders of preference if there
exist two matched couples $(w,m)$ and $(\tilde{w}, \tilde{m})$ such that:
\begin{enumerate}
\item $w$ is not matched with $\tilde{m}$.
\item $w$ prefers $\tilde{m}$ over $m$.
\item $\tilde{m}$ prefers $w$ over $\tilde{w}$.
\end{enumerate}
As before, a matching which is not unstable is called stable.
\end{defn}

On each night of the polygamous algorithm, each man $m \in M$ serenades under
the windows of $n_m$ women that he prefers the most out of all women who have
not (yet) rejected him, and then each woman $w \in W$, under whose
window more than $n_w$ men serenade, rejects every man who
serenaded under her window, except for the $n_w$ men she prefers most among
them.

Now, as before, assume that a subset of the women, denoted $L$, declare
false orders of preference for themselves.

For a person $p \in P$, Let us denote the set of people of the opposite sex
matched with this person under the original matching (resp.\ the new matching)
with $O(p) = \{o_1^p,...,o_{n_p}^p\}$ (resp.\ $N(p) = \{n_1^p,...,n_{n_p}^p\}$)
such that $p$ prefers $o_i^p$ over $o_{i+1}^p$ (resp.\ prefers $n_i^p$ over
$n_{i+1}^p$).

Before we phrase the polygamous version of Theorem \ref{monogamous}, we
have to redefine the circumstances under which a person is said to be better-,
or worse-off.
It should be noted that while in the monogamous scenario each person's order
of preference yields a full order on the set of possible matches for that
person (i.e., people of the opposite sex), in the polygamous case
each person's order of preference yields a partial order on the set of possible
matches for that person (i.e., $n_p$-tuples of people of the opposite sex).
This introduces an asymmetry between the following two definitions,
which did not exist in the monogamous scenario.

\begin{defn}\label{weakly-better-off}
A woman $w \in W$ is said to be weakly better-off if for each
$1 \le i \le n_w$, $w$ does not prefer $o_i^w$ over $n_i^w$.
\end{defn}

\begin{defn}\label{gain-only-worse-matches}
A man $m \in M$ is said to have gained only worse matches if he
prefers each member of $O(w)$ over each member of $N(w) \setminus O(w)$.
(Note that this condition is met in the special case in which $N(w)=O(w)$.)
\end{defn}

\begin{thm}\label{polygamous}
Under the above conditions, if all lying women are weakly better-off, then:
\begin{enumerate}[(a)]
\item All women are weakly better-off.
\item \label{only-sad-men} All men have gained only worse matches.
\end{enumerate}
\end{thm}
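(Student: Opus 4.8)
The plan is to mirror, step by step, the monogamous argument of Theorem~\ref{monogamous}, replacing each scalar quantity by the appropriate $n_p$-fold quantity and paying attention to the asymmetry built into Definitions~\ref{weakly-better-off} and~\ref{gain-only-worse-matches}. The first thing I would do is record the polygamous analogue of Lemma~\ref{didnt-reach-then-prefers-match}: since on every night a man serenades his top $n_m$ not-yet-rejecting women, the set of women he has \emph{ever} serenaded during \NA (resp.\ \OA) is an initial segment of his true preference list, so if $m$ never serenaded $w$ then $m$ prefers \emph{every} member of $N(m)$ (resp.\ $O(m)$) over $w$. I would also fix the bookkeeping for rejections: call $w$ a rejecter relative to $a$ if $a\in N(w)$ and $w$ rejected $a$ during \OA, and observe, exactly as in Lemma~\ref{rejecter-then-worse-off}, that such a $w$ is \emph{not} weakly better-off --- indeed every member of $O(w)$ is preferred by $w$ over the rejected $a$, so writing $a=n^w_j$ gives $o^w_j\succ_w n^w_j$ --- whence, by the hypothesis that all liars are weakly better-off, $w$ is innocent and her \NA-preferences are genuine.

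With these in hand I would first reduce part~(a) to part~(b). Assume (b) holds and suppose, for contradiction, that some $w$ is not weakly better-off, so $w$ prefers $o^w_k$ over $n^w_k$ for some $k$. A pigeonhole count then produces an \emph{abandoned} old spouse: among $o^w_1,\dots,o^w_k$, all of which $w$ prefers over $n^w_k$, at most $k-1$ can lie in $N(w)$, since the only new spouses $w$ prefers over $n^w_k$ are $n^w_1,\dots,n^w_{k-1}$; hence some $o:=o^w_j\in O(w)\setminus N(w)$ with $w$ preferring $o$ over $n^w_k$. Now I would run the man $o$ through the dichotomy: if $o$ never serenaded $w$ during \NA then the first lemma forces $o$ to prefer all of $N(o)$ over $w\in O(o)\setminus N(o)$, and since then $N(o)\setminus O(o)\neq\emptyset$ this exhibits a spouse in $N(o)\setminus O(o)$ preferred over the old spouse $w$, contradicting~(b); otherwise $o$ did serenade $w$ and was rejected, and as $w$ is innocent this means $w$ truly prefers all of $N(w)$ over $o$, contradicting $o\succ_w n^w_k\in N(w)$. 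Either way we reach a contradiction, so (b) implies (a).

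For part~(b) itself I would argue by contradiction and build a descending chain of rejections, just as in the final paragraph of the monogamous proof. Suppose some man $m_0$ has not gained only worse matches, so he prefers some $n\in N(m_0)\setminus O(m_0)$ over some $o\in O(m_0)$; set $w_1:=n$ and $a_1:=m_0$. Because $m_0$ serenaded $o$ during \OA he also serenaded the more-preferred $w_1$, yet $w_1\notin O(m_0)$, so $w_1$ rejected $a_1=m_0$, making $(w_1,a_1)$ a rejecter-pair. Given a rejecter-pair $(w_i,a_i)$ with rejection night $t_i$, $w_i$ is innocent by the observation above, and on night $t_i$ she keeps a set $K_i$ of $n_{w_i}$ men each preferred over $a_i$; since $a_i\in N(w_i)\setminus K_i$ while $|K_i|=|N(w_i)|$, some $B\in K_i\setminus N(w_i)$. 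As $B$ is preferred by the innocent $w_i$ over $a_i\in N(w_i)$ but $B\notin N(w_i)$, $B$ never serenaded $w_i$ during \NA, so by the first lemma $B$ prefers every one of his $n_B$ new spouses over $w_i$. I would then choose the next link $w_{i+1}\in N(B)\setminus O(B)$ to be a new spouse of $B$ who rejected him during \OA, and set $a_{i+1}:=B$, so that $a_{i+1}\in N(w_{i+1})$ and $(w_{i+1},a_{i+1})$ is again a rejecter-pair.

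The crux --- and the one place where the monogamous proof does not transcribe verbatim --- is showing that the rejection night strictly decreases, i.e.\ $t_{i+1}<t_i$. In the monogamous case this was automatic because a man courts one woman per night, so $B$'s rejection by $N(B)$ necessarily preceded his serenade of $w_i$; here $B$ serenades $n_B$ women at once, so a priori a new spouse could reject him on the very night $t_i$. The counting fix I would use is this: on night $t_i$ the $n_B$ women $B$ serenades all lie weakly above $w_i$ in his preference, and $w_i$ is one of them, so at most $n_B-1$ of them lie strictly above $w_i$; but all $n_B$ members of $N(B)$ lie strictly above $w_i$, so at least one $b\in N(B)$ is \emph{not} being serenaded on night $t_i$, which --- being above $w_i$, hence already reached, and, as $b\notin O(B)$, having rejected him --- means $b$ rejected $B$ on a night strictly earlier than $t_i$. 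Taking this $b$ as $w_{i+1}$ yields $t_{i+1}<t_i$, and the resulting infinite strictly decreasing sequence of nights is impossible, giving the desired contradiction and completing part~(b).
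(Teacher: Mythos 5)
Your proposal is correct and follows essentially the same route as the paper's proof: the same reduction of part (a) to part (b) via a pigeonhole argument producing an abandoned old spouse, the same notion of rejecter (hence not weakly better-off, hence innocent), the same construction of a kept-but-not-newly-matched man $B$ who never serenaded the rejecter during \NA, and the same counting observation (the paper's Lemma~\ref{poly-order-lemma}) that since $w_i$ occupies one of $B$'s $n_B$ serenade slots on the rejection night while all $n_B$ members of $N(B)$ are preferred over $w_i$, some member of $N(B)$ must have rejected $B$ strictly earlier, yielding the impossible descending chain of nights.
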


The special case of Theorem \ref{polygamous} in which all women are polygamous
and men are monogamous
can be easily proven by reduction to the conditions of Theorem \ref{monogamous}
by ``replicating'' each polygamous woman $w \in W$ into $n_w$ monogamous women
$\{(w,i)\}_{i=1}^{n_w}$, each having the same order of preference as $w$. For
each man $m \in M$, replace $w$ on his list of preferences with these women,
in such a way that he prefers $(w, i)$ over $(w, i+1)$ for all $i$. A reduction
along these lines was used in \cite{Dubins-Friedman} to generalize certain
properties of the monogamous scenario to the polygamous-women scenario and it
was shown there that the men matched with $(w,1),...,(w,n_w)$ by the
monogamous
algorithm are exactly those matched with $w$ by the
polygamous algorithm. Furthermore, in the notations of this paper,
it can be shown that the man matched with $(w,i)$ by \OA is $o_i^w$ and by
\NA is $n_i^w$, thus completing the reduction since this yields that
$w$ is weakly better-off if and only if none of the monogamous women
$(w,i)$ is worse-off, and since it is clear that as all men are monogamous,
if a man is not better-off under the reduction, then he has only gained
worse matches before the reduction.

Unfortunately, ``replicating'' each man in a similar same way will not yield a
proof for even the monogamous women - polygamous men scenario as easily,
for it is possible for a woman $w$
to be weakly better-off before the reduction by maintaining the same
match $m$, but to be worse-off under the reduction because her
match is e.g. $(m, 2)$ instead of $(m, 1)$. Also, every replicated
monogamous man not being better-off under the reduction does not necessarily
imply that every polygamous man has gained only worse matches before the
reduction.
To make things even worse, in the general case where both men and women may be
polygamous, running the monogamous algorithm after such a
``replication'' does not even produce the same matching
as would be produced by the polygamous algorithm,
as it may lead to situations such as a couple matched to each other
with multiplicity greater than 1.

Indeed, in order to prove Theorem \ref{polygamous} in its general form we
have to retrace our steps and revisit the inner workings of the proof
of Theorem \ref{monogamous}, rewriting it to accommodate for the
generalizations we introduced. We will now redefine some of the definitions
used in that proof, and then give its
generalized form.

\begin{defn}
A woman $w \in W$ is said to be a rejecter if she rejected any of the members
of $N(w)$ during \OA. Let us denote the set of all such rejected members
$R(w)$.
\end{defn}
\begin{defn}

A man $m \in M$ is said to be a rejectee if there exists a rejecter
$w \in N(m)$ such that $m \in R(w)$. Similarly to the proof of Theorem
\ref{monogamous}, a key role will be played by a man $B(w,m)$ for which, in
a sense, $m$ was rejected. The exact definition of this man will appear
in Lemma \ref{better-man-exists}.
\end{defn}

As before, we will begin with a Lemma by which we need only prove
part \ref{only-sad-men} of Theorem \ref{polygamous}.

\begin{lemma}
If a woman $w \in W$ is not weakly better-off, then there exists $m \in O(w)$
who has not gained only worse matches.
\end{lemma}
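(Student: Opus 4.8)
The plan is to mirror the monogamous analogue (the second Lemma after \verb|didnt-reach-then-prefers-match| in the $W\cup M$ proof), adapting it to handle the partial order induced by polygamy. We are given that $w$ is not weakly better-off, which by Definition \ref{weakly-better-off} means there exists some index $1 \le i \le n_w$ with $w$ preferring $o_i^w$ over $n_i^w$. I want to produce a man $m \in O(w)$ who has not gained only worse matches, i.e.\ (negating Definition \ref{gain-only-worse-matches}) a man $m$ for whom there exist $\tilde{w}_O \in O(m)$ and $\tilde{w}_N \in N(m)\setminus O(m)$ such that $m$ prefers $\tilde{w}_N$ over $\tilde{w}_O$.

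First I would isolate, among the ``disappointing'' positions of $w$, the smallest index $i$ at which $w$ prefers $o_i^w$ over $n_i^w$; call this original partner $m := o_i^w \in O(w)$. Because $w$'s declared preferences during \NA coincide with her true ones (she is not a liar, since she would then be a lying woman who is not weakly better-off, contradicting the hypothesis), I can argue as in the monogamous case that $m$ did not serenade under $w$'s window during \NA: $w$ prefers $m$ to $n_i^w$, yet $n_i^w$ is among her $n_w$ most-preferred \NA suitors, so had $m$ serenaded her he would have displaced $n_i^w$ or someone worse, forcing $w$'s $i$-th \NA match to be at least as good as $m$. The crux of the argument is to convert ``$m$ never reached $w$ during \NA'' into a statement comparing $m$'s \NA matches with $w$ herself.

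Here the key adaptation is the polygamous analogue of Lemma \ref{didnt-reach-then-prefers-match}: if $m$ never serenaded under $w$'s window during \NA, then every member of $N(m)$ is preferred by $m$ over $w$, since $m$ fills his entire quota of $n_m$ women, all strictly before he would ever approach $w$. In particular $w \notin N(m)$, so $w \in O(m)$ serves as a genuinely abandoned match: $w \in O(m) \setminus N(m)$, and every woman in $N(m)$ (hence in particular any woman in $N(m)\setminus O(m)$) is strictly preferred by $m$ over $w = $ this member of $O(m)$. The hard part will be ruling out the degenerate possibility that $N(m) = O(m)$, in which case Definition \ref{gain-only-worse-matches} is vacuously satisfied and $m$ \emph{has} gained only worse matches; but since we have just shown $w \in O(m) \setminus N(m)$, the sets $O(m)$ and $N(m)$ differ, and the existence of a strictly-preferred newly-acquired woman follows from $m$ having filled his quota with women all preferred over the abandoned $w$. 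Thus $m$ has not gained only worse matches, completing the reduction.
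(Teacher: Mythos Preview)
Your overall strategy is sound and closely parallels the paper's, but the key step---showing that $m:=o_i^w$ never serenaded under $w$'s window during \NA---is not fully justified as written. You argue that since $w$ prefers $m$ over $n_i^w$, had $m$ serenaded ``he would have displaced $n_i^w$ or someone worse, forcing $w$'s $i$-th \NA\ match to be at least as good as $m$.'' This reasoning tacitly assumes $m\notin\{n_1^w,\dots,n_{i-1}^w\}$. If $m$ serenaded and was among her suitors, then (since $w$ prefers $m$ over $n_i^w\in N(w)$) $m$ cannot have been rejected, so $m=n_k^w$ for some $k<i$; in that case nothing is ``displaced'' and the $i$-th \NA\ match remains $n_i^w$, which is strictly worse than $m$---no contradiction arises from your displacement argument alone. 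The gap is easily closed using the minimality of $i$ that you did invoke: if $o_i^w=n_k^w$ with $k<i$, then by minimality $w$ weakly prefers $n_k^w$ over $o_k^w$, hence strictly prefers $o_i^w=n_k^w$ over $o_k^w$---contradicting $k<i$. So your choice of the \emph{smallest} such $i$ is essential here, but you never actually deploy it.

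The paper sidesteps this subtlety with a pigeonhole argument: for \emph{any} index $i$ at which $w$ prefers $o_i^w$ over $n_i^w$, all of $o_1^w,\dots,o_i^w$ are preferred over $n_i^w$, whereas exactly $i-1$ of $w$'s \NA\ suitors (namely $n_1^w,\dots,n_{i-1}^w$) are preferred over $n_i^w$; hence some $o_j^w$ with $j\le i$ never serenaded. This avoids having to track which particular $o_j^w$ works and makes no use of minimality. Your approach and the paper's are ultimately equivalent once your gap is patched; the remainder of your argument (deducing $w\in O(m)\setminus N(m)$, then using $|N(m)|=|O(m)|$ to extract some $\tilde w\in N(m)\setminus O(m)$ preferred over $w$) matches the paper exactly.
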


\begin{proof}
Since it is given that all liars are weakly better-off, if follows that
$w$ is not
a liar, and therefore her order of preference during \NA is her
true order of preference. Since $w$ is not weakly better-off, there exists
$1 \le i \le n_w$
such that $w$ prefers $o_i^w$ over $n_i^w$ and thus prefers
$o_1^w, ..., o_i^w$ over $n_i^w$. By induction and by the definition of the
polygamous algorithm, there are exactly $i-1$ men who serenaded under $w$'s
window during \NA and that she (truly) prefers over $n_i^w$
(these are $n_1^w$, ..., $n_{i-1}^w$), so by the pigeonhole principle, there
exists $1 \le j \le i$ such that $o_j^w$ did not serenade under $w$'s window
during \NA. Therefore, since $o_j^w$ approaches women according to his own
order of preference, $o_j^w$ prefers all women in $N(o_j^w)$ over $w$.
Since $|N(o_j^w)|=n_{o_j^w}=|O(o_j^w)|$, and since
$w \in O(o_j^w) \setminus N(o_j^w)$, it follows that there exists
$\tilde{w} \in N(o_j^w) \setminus O(o_j^w)$ and as stated, $o_j^w$
prefers her over $w$ and hence (as $w \in O(o_j^w)$) $o_j^w$ has not gained
only worse matches.
\end{proof}

\begin{lemma}\label{gained-non-worse-match-then-rejectee}
If a man $m \in M$ has not gained only worse matches, then he is a
rejectee.
\end{lemma}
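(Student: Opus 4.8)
The plan is to mirror the monogamous Lemma~\ref{better-off-then-new-rejecter}, adapting its single-partner argument to respect the quota $n_m$. The monogamous proof observed that a better-off man serenades his worse (original) partner on the last night only after having been rejected by his strictly preferred new partner. Here the analogous ``strictly preferred new partner'' must first be extracted from the hypothesis. Since $m$ has not gained only worse matches, by Definition~\ref{gain-only-worse-matches} there exist $\tilde{w} \in N(m) \setminus O(m)$ and $w \in O(m)$ with $m$ preferring $\tilde{w}$ over $w$. As $w \in O(m)$, $m$ weakly prefers $w$ over his least-preferred original partner $o_{n_m}^m$, so by transitivity $m$ strictly prefers $\tilde{w}$ over $o_{n_m}^m$. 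This $\tilde{w}$ --- a woman outside his original match set whom he nonetheless ranks above one of its members --- will play the role of the rejecting partner.

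Next I would analyze \OA. When the algorithm halts, $m$ serenades exactly the women of $O(m)$ on the final night, and on every night $m$ serenades the $n_m$ women he most prefers among those who have not yet rejected him. I claim $\tilde{w}$ must have rejected $m$ on some night of \OA. Suppose not. Then on the final night $\tilde{w}$ is a non-rejecter of $m$, while $o_{n_m}^m \in O(m)$ is being serenaded and hence is also a non-rejecter. Since $m$ strictly prefers $\tilde{w}$ over $o_{n_m}^m$, the woman $\tilde{w}$ would be counted among the $n_m$ most-preferred non-rejecters ahead of $o_{n_m}^m$, so $m$ would serenade her on the final night, forcing $\tilde{w} \in O(m)$ and contradicting $\tilde{w} \notin O(m)$. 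This is the quota-sensitive form of the order-of-approach reasoning already used in Lemma~\ref{didnt-reach-then-prefers-match}. Hence $\tilde{w}$ rejected $m$ during \OA.

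Finally, since $\tilde{w} \in N(m)$ we have $m \in N(\tilde{w})$, and because $\tilde{w}$ rejected $m$ during \OA we get $m \in R(\tilde{w})$; thus $\tilde{w}$ is a rejecter lying in $N(m)$ with $m \in R(\tilde{w})$, so by definition $m$ is a rejectee. I expect the main obstacle to be the quota bookkeeping in the contradiction step: in the monogamous case the fact that ``$m$ serenades his original partner only after the new one rejected him'' was automatic, whereas here one must argue that the single witness $\tilde{w}$, if never rejected, would displace precisely the marginal partner $o_{n_m}^m$ from $m$'s top-$n_m$ list. This is exactly why it is essential to compare $\tilde{w}$ against the \emph{least}-preferred member of $O(m)$ rather than against the arbitrary $w$ furnished by the hypothesis.
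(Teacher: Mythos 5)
Your proof is correct and follows essentially the same route as the paper's: extract the witness pair $w \in O(m)$, $\tilde{w} \in N(m) \setminus O(m)$ from the definition, and use the fact that $m$ approaches women in order of preference to force a rejection of $m$ by $\tilde{w}$ during \OA, whence $m \in R(\tilde{w})$ and $m$ is a rejectee. The only difference is cosmetic: the paper argues directly from any night on which $m$ serenades under $w$'s window (he must have reached $\tilde{w}$ on that night or earlier, since he prefers her over $w$), so your detour through the marginal partner $o_{n_m}^m$ and the final night, while valid, is not actually ``essential'' as you suggest.
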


\begin{proof}
Since $m$ has not gained only worse matches, there there exist $w
\in O(m)$ and $\tilde{w} \in N(m) \setminus O(m)$ such that $m$
prefers $\tilde{w}$ over $w$. Since $w \in O(m)$, it follows that
$m$ serenaded under $w$'s window during \OA. Since $m$ approaches
women according to his own order of preference, he would not have serenaded
under $w$'s window during \OA without having serenaded on the same night, or on
a previous night, under $\tilde{w}$'s window. However, since $m$ is
not matched with $\tilde{w}$ at the end of \OA, then $\tilde{w}$ must have
rejected him
during \OA and therefore, by definition, $m \in R(\tilde{w})$,
and thus $m$ is a rejectee.
\end{proof}

\begin{lemma}\label{rejecter-then-not-weakly-better-off}
If a woman $w \in W$ is a rejecter, then she is not weakly better-off.
\end{lemma}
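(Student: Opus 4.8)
The statement to prove is Lemma~\ref{rejecter-then-not-weakly-better-off}: if a woman $w \in W$ is a rejecter, then she is not weakly better-off. This is the polygamous analogue of Lemma~\ref{rejecter-then-worse-off} from the monogamous case, and the plan is to mirror that earlier argument while accounting for the fact that each woman now has a quota $n_w$ and an ordered tuple of matches rather than a single match.

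First I would unpack the definition of rejecter: since $w$ rejected some member of $N(w)$ during \OA, there exists a man $m \in R(w) \cap N(w)$, that is, a man who ends up matched with $w$ under the new matching but whom $w$ actively rejected during the original algorithm. The crux is to locate the index of $m$ within the tuple $N(w) = \{n_1^w, \dots, n_{n_w}^w\}$, say $m = n_k^w$, and to argue that $w$ strictly prefers her original match $o_k^w$ over $n_k^w$, which is exactly what is needed to violate weak better-offness (Definition~\ref{weakly-better-off}).

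The key step, and the one I expect to be the main obstacle, is the counting argument that pins down this strict preference. When $w$ rejected $m = n_k^w$ during \OA, she did so because at that moment $n_w$ men she preferred to $m$ were serenading under her window; by the usual monotonicity of the Gale-Shapley algorithm (a woman's set of suitors only improves over time, in the sense that on each night she retains the $n_w$ best men seen so far), she is matched at the end of \OA to $n_w$ men each of whom she prefers over $m$. In other words, \emph{all} of $o_1^w, \dots, o_{n_w}^w$ are preferred by $w$ over $m = n_k^w$. Since $w$ prefers $n_1^w, \dots, n_{k-1}^w$ over $n_k^w$ but not $n_k^w, \dots, n_{n_w}^w$, and since all $n_w$ of the $o_i^w$ beat $n_k^w$ while only $k-1$ of the $n_i^w$ do, a pigeonhole comparison forces some $o_j^w$ with $j \le k$ to be strictly preferred over $n_k^w$, and in particular $w$ prefers $o_k^w$ over $n_k^w$. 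This is the analogue of the ``induction and transitivity'' remark in Lemma~\ref{rejecter-then-worse-off}, but here the ordering bookkeeping over tuples is what makes it delicate.

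Having established that $w$ prefers $o_k^w$ over $n_k^w$ for this index $k$, the conclusion is immediate: by Definition~\ref{weakly-better-off}, weak better-offness requires that $w$ \emph{not} prefer $o_i^w$ over $n_i^w$ for every $i$, and we have exhibited an index $k$ where exactly this preference holds. Hence $w$ is not weakly better-off, completing the proof. The only care needed throughout is to keep the two orderings — on the original tuple and on the new tuple — aligned by index, since the asymmetry between the women's and men's definitions (noted in the text before Definition~\ref{weakly-better-off}) means the index-wise comparison is the correct notion and the pigeonhole must respect it.
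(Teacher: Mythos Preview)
Your argument is correct and follows essentially the same line as the paper's proof: pick $r = n_k^w \in R(w)$, observe (via the monotonicity of the deferred-acceptance process) that $w$ prefers every member of $O(w)$ over any man she rejected during \OA\ and hence over $r$, so in particular $o_k^w$ is preferred over $n_k^w$, violating Definition~\ref{weakly-better-off}. The pigeonhole discussion you insert is unnecessary---once you have established that \emph{all} of $o_1^w,\dots,o_{n_w}^w$ beat $n_k^w$, the conclusion $o_k^w$ beats $n_k^w$ is immediate---but it does no harm.
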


\begin{proof}
Since $w$ is a rejecter, there exists $r \in R(w)$ and by
the definition of $R(w)$ there exists $1 \le i \le n_w$ such that
$n_i^w = r$. Since, by definition of $R(w)$, $w$ rejected $r$ during
\OA, and since, by induction and transitivity, $w$ prefers each member of $O(w)$
over any man she rejected during \OA, it follows that she
prefers $o_i^w$ over $r=n_i^w$, and is, thus, not weakly better-off.
\end{proof}

\begin{lemma}\label{better-man-exists}
If a woman $w \in W$ is a rejecter, then for each $r \in R(w)$ there exists
a man $B(w,r)$ such that
\begin{enumerate}
\item $B(w,r)$ serenaded under $w$'s window during \OA on the night
on which she rejected $r$, but $B(w,r)$ was not rejected by her on that night.
\item $B(w,r)$ prefers each member of $N(B(w,r))$ over $w$.
\end{enumerate}
(If more than one such man exists, define $B(w,r)$ to be one of these men,
arbitrarily.)
\end{lemma}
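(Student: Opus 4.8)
The plan is to mirror the structure of the monogamous proof of Lemma \ref{rejecter-then-better-man-prefers-new-over-her}, replacing the single man $B(w)$ by an appropriate choice from the set of men $w$ \emph{kept} on the night she rejected $r$. First I would record the two facts that make the true preferences available to us: since $w$ is a rejecter, Lemma \ref{rejecter-then-not-weakly-better-off} tells us $w$ is not weakly better-off, and as all liars are assumed weakly better-off, $w$ is innocent, so her declared preferences during \NA coincide with her true ones. I would also note that $r \in N(w)$, since $r \in R(w)$ and $R(w) \subseteq N(w)$ by the definition of a rejecter.

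Next I would describe the night of \OA on which $w$ rejects $r$. Since she rejects someone, strictly more than $n_w$ men serenade under her window that night, and she keeps exactly the $n_w$ men she most prefers among them; call this set $S$. Every man in $S$ is (truly) preferred by $w$ over $r$, while $r \notin S$ and $|S| = n_w$. Any member of $S$ satisfies property (1) of the lemma automatically, so the entire content of the lemma reduces to exhibiting one member of $S$ satisfying property (2). By Lemma \ref{didnt-reach-then-prefers-match}, it therefore suffices to find a man in $S$ who never serenaded under $w$'s window during \NA.

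The key step --- and the part I expect to be the main obstacle --- is the claim that \emph{not every} man in $S$ serenaded under $w$ during \NA. Here I would invoke the standard deferred-acceptance monotonicity: because each woman always retains the $n_w$ men she most prefers among all those seen so far, the set $N(w)$ consists precisely of the $n_w$ men $w$ most prefers among all men who ever serenaded under her window during \NA (using that her \NA preferences are her true ones). Consequently, any man who serenaded under $w$ during \NA but whom she does not keep is ranked below every member of $N(w)$, in particular below $r$. Since every man of $S$ is preferred over $r$, were all of $S$ to serenade during \NA they would all lie in $N(w)$; but $|S| = n_w = |N(w)|$ would then force $S = N(w)$, contradicting $r \in N(w) \setminus S$.

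Finally I would let $B(w,r)$ be any man of $S$ that did not serenade under $w$'s window during \NA (at least one exists by the previous paragraph, and if several exist the choice is arbitrary). Property (1) holds since $B(w,r) \in S$, and property (2) follows immediately from Lemma \ref{didnt-reach-then-prefers-match}. The one ingredient that must be stated carefully is the monotonicity fact identifying $N(w)$ with $w$'s top $n_w$ serenaders during \NA; everything else is bookkeeping transferred from the monogamous argument.
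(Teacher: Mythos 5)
Your proposal is correct and follows essentially the same route as the paper: identify the set of $n_w$ men $w$ kept on the night she rejected $r$, use the fact that $w$ is innocent (via Lemma \ref{rejecter-then-not-weakly-better-off}) together with $r\in N(w)$ to conclude that not all of them can have serenaded under her window during \NA, and take $B(w,r)$ to be one who did not. The only cosmetic difference is that you cite the monogamous Lemma \ref{didnt-reach-then-prefers-match} for property (2), where the paper argues the polygamous analogue inline; the substance is identical.
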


\begin{proof}
By Lemma \ref{rejecter-then-not-weakly-better-off}, $w$ is not a liar,
and as such, her order of preference during \NA is her true order of
preference. Let $B$ be the set of all men who serenaded under her window
on the night during \OA on which she rejected $r$, but who were not rejected
by her on that night. By definition of the algorithm, $w$ prefers
each member of $B$ over $r$, and $|B|=n_w$. Since $w$ did not reject
$r$ during \NA (since $r \in N(w)$), and since the order of preference
of $w$ during \NA is her true order of preference, then not all
member of $B$ serenaded under her window during \NA (for she is matched
under \NA to the set of the $n_w$ men that she prefers most out of all
the men who serenaded under her window during \NA). Define, therefore,
$B(w,r)$ to be a member of $B$ who did not serenade under $w$'s window
during \NA. Since $B(w,r)$ did not serenade under
$w$'s window during \NA, and since he approaches women according to his own
order of preference, it follows that he prefers each member of
$N(B(w,r))$ over $w$.
\end{proof}

\begin{lemma}\label{poly-order-lemma}
If a woman $w \in W$ is a rejecter, then for each $r \in R(w)$ there exists
a woman $\tilde{w} \in N(B(w,r))$ such that
\begin{enumerate}
\item $\tilde{w}$ is a rejecter.
\item $B(w,r) \in R(\tilde{w})$
\item During \OA, $w$ rejected $r$ on a strictly later night than the night
on which $\tilde{w}$ rejected $B(w,r)$.
\end{enumerate}
\end{lemma}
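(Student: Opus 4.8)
The plan is to mirror the monogamous Lemma \ref{order-lemma}, upgrading its single-slot reasoning to a pigeonhole count over the $n_{B(w,r)}$ slots of the man $B(w,r)$. As in that proof, every mention of serenading or rejection below refers to \OA. I begin by invoking Lemma \ref{better-man-exists} to fix the man $m := B(w,r)$ together with its two guaranteed properties: $m$ serenaded under $w$'s window during \OA on the night — call it night $t_0$ — on which $w$ rejected $r$, and $m$ prefers every member of $N(m) = N(B(w,r))$ over $w$. Since $m$ prefers each member of $N(m)$ over $w$ but does not prefer $w$ over herself, we get $w \notin N(m)$, a fact the count will need.

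The heart of the argument is a count at night $t_0$ of \OA. On that night $m$'s proposal set is exactly the set of the $n_m$ women he most prefers among those who have not (during \OA) rejected him, and this set contains $w$. For each $\tilde w \in N(m)$, since $m$ prefers $\tilde w$ over $w$, one of two things holds: either $\tilde w$ has not rejected $m$ before night $t_0$, in which case, ranking above the already-included $w$ among the women still available to him, she too belongs to $m$'s top-$n_m$ proposal set on night $t_0$; or $\tilde w$ rejected $m$ on some night strictly earlier than $t_0$. Were the first alternative to hold for all $n_m$ members of $N(m)$, the proposal set on night $t_0$ would contain these $n_m$ distinct women together with $w \notin N(m)$, hence at least $n_m + 1$ women, contradicting that it has exactly $n_m$ members. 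Therefore at least one $\tilde w \in N(m)$ rejected $m$ during \OA on a night strictly before $t_0$; fix such a $\tilde w$.

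It then remains to read off the three assertions for this $\tilde w$. Since $\tilde w \in N(m)$ we have $m = B(w,r) \in N(\tilde w)$, and since $\tilde w$ rejected $m$ during \OA, the definition of $R$ gives $m \in R(\tilde w)$, which is assertion (2); as $R(\tilde w)$ is thereby nonempty, $\tilde w$ is a rejecter, which is assertion (1). For assertion (3), $w$ rejected $r$ on night $t_0$ whereas $\tilde w$ rejected $B(w,r)$ on a strictly earlier night, so $w$'s rejection of $r$ occurs strictly later.

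I expect the counting step to be the main obstacle, specifically the justification of the dichotomy: that a still-available woman whom $m$ prefers over $w$ must already occupy one of his top-$n_m$ slots, and that the failure of this for every member of $N(m)$ overflows the $n_m$ slots once $w$ is adjoined. A secondary point that requires care is the bookkeeping between the two runs: the set $N(m)$ and the membership $m \in R(\tilde w)$ are defined through \NA, while every rejection and serenade used in the count takes place in \OA. It is precisely this separation of the two runs that allows $\tilde w$ to reject $m$ in \OA and yet be matched to $m$ in \NA without contradiction.
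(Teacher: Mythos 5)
Your proof is correct and follows essentially the same route as the paper's: invoke Lemma \ref{better-man-exists} for the two properties of $B(w,r)$, use the fact that he proposes each night to his top $n_{B(w,r)}$ available women while preferring all $n_{B(w,r)}$ members of $N(B(w,r))$ over $w$ to conclude that one of them rejected him strictly before he reached $w$, and then read off the three assertions exactly as the paper does. Your explicit pigeonhole at night $t_0$ (including the observation that $w \notin N(B(w,r))$) is just a more carefully spelled-out version of the paper's terser counting step.
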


\begin{proof}
Once again, throughout this proof, for the sake of conciseness, wherever
we discuss whether a rejection or serenading occurred, or when it occurred, we
always refer to \OA.
Since $w$ is a rejecter, then by Lemma \ref{better-man-exists},
$B(w,r)$ serenaded under $w$'s window (on the
night on which she rejected $r$) and $B(w,r)$ prefers each member of
$N(B(w,r))$ over
$w$. Hence, as $B(w,r)$ approaches $n_{B(w,r)}$ women each night according to
his own order of preference, and as $|N(B(w,r))| = n_{B(w,r)}$, it follows that
$B(w,r)$ serenaded
on earlier nights under each of the windows of $N(B(w,r))$
and was rejected by at least one of them on a night strictly earlier than
any night on which he serenaded under $w$'s window --- let us denote
such a woman $\tilde{w}$. Since $\tilde{w}$ rejected $B(w,r)$ (to whom
she is matched under the new matching), it follows that she is a rejecter and
that $B(w,r) \in R(\tilde{w})$. Moreover, since, by definition of $B(w,r)$,
$w$ rejected $r$ on a night on which
$B(w,r)$ serenaded under her window as well,
then this night is a strictly later night than the night
on which $\tilde{w}$ rejected $B(w,r)$.
\end{proof}

To complete the proof of Theorem \ref{polygamous}, assume for contradiction
that there exists a man $m_1 \in M$ who has not gained only worse matches.
By Lemma \ref{gained-non-worse-match-then-rejectee}, $m_1$ is a rejectee,
therefore there exists a rejecter $w_1 \in N(m_1)$ such that $m_1 \in R(w_1)$.
Now, for each $i \in \mathbb{N}$, assume by induction that $w_i$ is a rejectee
and that $m_i \in R(w_i)$ and set, by Lemma \ref{better-man-exists},
$m_{i+1} = B(w_i, m_i)$. By Lemma \ref{poly-order-lemma} , there exists a
rejecter $w_{i+1} \in N(m_{i+1})$. Moreover, by that lemma, during \OA,
$w_i$ rejected $m_i$ on a night strictly later than the night on which
$w_{i+1}$ rejected $m_{i+1}$.

From finiteness of $W \times M$, there must exist $i < j$ such that
$w_i = w_j$ and $m_i = m_j$, but by induction, since $i<j$,
then during \OA, $w_i$ rejected $m_i$ on a night strictly later than the night
on which $w_j$ rejected $m_j$ --- a contradiction.
\qed

\subsection{Blacklists and Mismatched Quotas}

As mentioned before, in \cite{Gale-Shapley}, it is not required that
the sum of the quotas of all colleges be the same as the number of
applicants, resulting is some colleges not fulfilling their quotas or
some applicants not being accepted to any college. Moreover, it is allowed
for a college to remove some of the students from its preference list,
indicating that the college is unwilling to accept these candidates even
if it means that its quota will not be met. Similarly, it is allowed for
an applicant to remove some of the colleges from their preference lists,
indicating their unwillingness to attend these colleges even at the
risk of not being accepted to any college.

The modified algorithm-step for this scenario (in the notations
used throughout this document) is that if, by a certain night, the
number of women who have not blacklisted or (yet) rejected a man $m \in M$,
and are not
blacklisted by him, is less than $n_m$, then on that night he will
serenade under the windows of all these women. (Otherwise, if there are at
least $n_m$ such women, he will, as before, serenade under the windows of
the $n_m$ women that he prefers most out of them.)

Let us adjust our notations for this scenario, and then prove
that the result of this paper still holds under it:

\begin{defn}
A map between W and M, mapping each $p \in W \cup M$ to
{\bf at most } $n_p$ members of the set $p$ does not belong to, is
called a matching.
\end{defn}

\begin{defn}
A matching is called unstable under the given orders of preference if either
the conditions of Definition \ref{matching-quotas-instability} are met,
or if there exists a matched couple $(w,m)$ and a woman (resp.\ man)
$p$ such that:
\begin{enumerate}
\item $p$ is matched with less than $n_p$ spouses.
\item $p$ has not blacklisted $m$ (resp.\ $w$).
\item $m$ (resp.\ $w$) prefers $p$ over $w$ (resp.\ $m$).
\end{enumerate}
Once again, a matching which is not unstable is called stable.
\end{defn}

For a person $p \in P$, let us denote $O(p)$ and $N(p)$ as before, and for
each $1 \le i \le |O(p)|$ (resp.\ $1 \le i \le |N(p)|$),
let us denote $o_i^p$ (res.\ $n_i^p$) as before as well.

\begin{defn}
A woman $w \in W$ is said to be weakly better-off if the following conditions
hold:
\begin{enumerate}
\item $N(w)$ contains none of the men blacklisted by $w$.
\item $|O(w)| \le |N(w)|$
\item For each $1 \le i \le O(w)$, $w$ does not prefer $o_i^w$ over
$n_i^w$.
\end{enumerate}
If, in addition, either for some $1 \le i \le O(w)$, $w$ prefers $n_i^w$
over $o_i^w$ (``$w$ has improved her matches''), or $|O(w)| < |N(w)|$
(``$w$ has gained matches''), then $w$ is said to be better-off.
\end{defn}

The definition of a man having gained only worse matches remains
unchanged. Specifically, it should be emphasized that this definition does
not require that $|N(m)| \le |O(m)|$, but we will show in Corollary
\ref{unchanged-match-sizes} that under the above conditions the inverse
is an impossibility. Similarly, the same corollary will show that 
under the above conditions, a woman may not be
better-off due to gaining matches, but only due to improving them.
It should also be noted that, by the definition of the
algorithm, it is not possible for any person declaring their true
preferences to be matched, under the new matching, to any person blacklisted
by them.

\begin{cor}\label{mismatched-quotas-and-blacklists}
Theorem \ref{polygamous} holds under the above conditions.
\end{cor}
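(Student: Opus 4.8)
The plan is to deduce Corollary~\ref{mismatched-quotas-and-blacklists} from Theorem~\ref{polygamous} by a reduction, embedding the blacklist/mismatched-quota instance into a larger instance that has \emph{complete} preference lists and \emph{balanced} quotas, to which Theorem~\ref{polygamous} applies verbatim. First I would introduce, for each woman $w \in W$, a set of $n_w$ dedicated ``dummy'' men $\mu^w_1, \dots, \mu^w_{n_w}$, and for each man $m \in M$, a set of $n_m$ dedicated dummy women $\nu^m_1, \dots, \nu^m_{n_m}$, each dummy having quota $1$. Every real woman $w$ is given the complete list ``her acceptable real men (in her declared order, with her true order used when measuring welfare); then her own dummies ranked $\mu^w_1 \succ \dots \succ \mu^w_{n_w}$; then everyone else (her blacklisted men and all foreign dummies) in arbitrary order'', and symmetrically for real men; each $\mu^w_i$ lists $w$ first, then all dummy women, then the remaining real women, and symmetrically for $\nu^m_j$. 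The point is twofold: the total quota on each side becomes $\sum_{w \in W} n_w + \sum_{m \in M} n_m$, so the enlarged instance is automatically balanced \emph{regardless} of the original imbalance, and every agent now has a complete list, so Theorem~\ref{polygamous} is applicable.

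The heart of the argument is a faithfulness lemma: running the balanced, complete-list polygamous algorithm on the enlarged instance reproduces, on the real--real pairs, exactly the matching produced by the blacklist/mismatched-quota algorithm on the original instance, both for \OA and for \NA. I would prove this by tracking serenades. Since every real agent ranks every acceptable real partner above all dummies, and every dummy above every blacklisted partner, a real man's serenades under real women's windows are unchanged, and he uses his dedicated dummy women only to pad his quota once fewer than $n_m$ acceptable non-rejecting real women remain --- which is precisely the modified algorithm-step of the blacklist scenario. Dually, a real woman never accepts a blacklisted man, because her $n_w$ dedicated dummy men always serenade her and outrank any blacklisted man, so she fills any slot left empty by real men with her \emph{top-indexed} available dummies $\mu^w_1, \mu^w_2, \dots$. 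One must also check that ``leftover'' dummies never disturb the real players: a counting argument shows the free dummy men and free dummy women are equinumerous, and by stability of the enlarged matching no real woman can be matched to a foreign dummy (one of her own dummies, who ranks her first, would otherwise form a blocking pair), so the free dummies pair among themselves.

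With the faithfulness lemma in hand, the conclusion follows by translating the two sides' welfare notions across the embedding. Because each woman's enlarged match is her real matches (in preference order) followed by her top-indexed own dummies, the three conditions of her blacklist-scenario ``weakly better-off'' status are equivalent to the single inequality of Definition~\ref{weakly-better-off} applied in the enlarged instance: the no-blacklisted-men condition is automatic from the construction, the condition $|O(w)| \le |N(w)|$ is forced since otherwise a real old match would sit at a position opposite a dummy new match, and the position-by-position condition matches on the real matches while at a dummy position it compares $\mu^w_{i-|O(w)|}$ against the better-indexed $\mu^w_{i-|N(w)|}$. Dually, for a man, ``gained only worse matches'' in the enlarged instance restricts to exactly Definition~\ref{gain-only-worse-matches} on his real matches, since the real members of $N(m) \setminus O(m)$ are precisely the non-dummy elements of the enlarged $N(m) \setminus O(m)$. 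Thus the hypothesis transfers, Theorem~\ref{polygamous} yields both conclusions in the enlarged instance, and they translate back. The main obstacle is this faithfulness lemma together with the dictionary of definitions: the dummies' preferences must be designed carefully enough that the enlarged polygamous run mimics the blacklist step exactly, that leftover dummies provably stay out of the way, and that the asymmetric, three-part blacklist-scenario definitions line up cleanly with the single-inequality polygamous definitions --- in particular, ranking each woman's own dummies by index in both her declared and her true lists so that the dummy positions never manufacture a spurious ``worse-off''.
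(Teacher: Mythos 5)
Your reduction is essentially identical to the paper's: the paper likewise introduces, for each person $p$, $n_p$ monogamous dummies of the opposite sex who rank $p$ first, places them on $p$'s list after the non-blacklisted real partners and before the blacklist and foreign dummies, and then invokes Theorem \ref{polygamous} after checking faithfulness of the matchings and the translation of the welfare definitions. Your additional details (the balancing count, the blocking-pair argument keeping leftover dummies away from real players, and the dictionary between the three-part and single-inequality definitions) are precisely the verifications the paper leaves to the reader, and they are correct.
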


\begin{proof}
Corollary \ref{mismatched-quotas-and-blacklists} can be proven by
reducing to the conditions of Theorem \ref{polygamous} by introducing,
for each person $p$, $n_p$ new monogamous people of the opposite sex
$\emptyset_1^p,...,\emptyset_{n_p}^p$, each of whom prefers $p$ the most
(the rest of their orders of preference can be arbitrary).
The original (resp.\ new) order of preference for $p$ will now be:
first, all of the original people of the opposite sex who are not
originally (resp.\ newly) blacklisted by $p$, ordered according to her
or his given original (resp.\ new) order of preference,
then $\emptyset_1^p,...,\emptyset_{n_p}^p$ in this order, and then,
in an arbitrary order, the original (resp.\ new) blacklist of $p$ and the
people newly-introduced for the other people of the same sex as $p$.
It is straight-forward to verify that the quotas of all women (original and
newly-introduced) match the quotas of all men (original and newly-introduced).
It is left for the reader to verify that the old (resp.\ new) matching
before the reduction is identical to the old (resp.\ new) matching under
the reduction, after the newly-introduced people have been removed from it,
that each of the original women is weakly better-off before the reduction
iff she is weakly better-off under the reduction, and that if each of the
original men has gained only worse matches under the reduction, then he has
also gained only worse matches before the reduction.
\end{proof}

\begin{cor}\label{unchanged-match-sizes}
Under the conditions of Corollary \ref{mismatched-quotas-and-blacklists},
for each person $p \in W \cup M$, it holds that $|N(p)| = |O(p)|$.
\end{cor}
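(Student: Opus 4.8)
The plan is to prove the Corollary by a conservation-of-matches argument: I will show that $O$ and $N$ contain the same number of matched couples, and that this common total is sandwiched between a woman-side inequality and a man-side inequality, each of which is tight only when every individual person's match-size is preserved. The starting bookkeeping observation is that a matched couple $(w,m)$ is counted exactly once on each side, so that $\sum_{w \in W}|O(w)| = \sum_{m \in M}|O(m)|$ and $\sum_{w \in W}|N(w)| = \sum_{m \in M}|N(m)|$.

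The woman-side inequality is then immediate. By Corollary \ref{mismatched-quotas-and-blacklists} (that is, Theorem \ref{polygamous} under the present conditions) every woman is weakly better-off, and the second clause of the blacklist version of that definition is precisely $|O(w)| \le |N(w)|$. Summing over $w \in W$ gives $\sum_w|O(w)| \le \sum_w|N(w)|$, i.e.\ $O$ has at most as many couples as $N$.

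The crux, which I expect to be the main obstacle, is the reverse man-side inequality: no man gains matches, $|N(m)| \le |O(m)|$ for every $m \in M$. A naive attempt---deducing from $|N(m)| > |O(m)|$ that some woman $\tilde{w} \in N(m) \setminus O(m)$ rejected $m$ during \OA and is therefore not weakly better-off---does not close, because $\tilde{w}$ might herself have gained matches, so that her having rejected one of her new partners need not contradict the (merely componentwise) definition of weakly better-off; such gains could cascade. I would instead pass to the balanced instance constructed in the reduction proving Corollary \ref{mismatched-quotas-and-blacklists}, in which every person is matched to exactly their quota and, crucially, each person's own ``null'' partners lie in their preference list strictly below all of their acceptable real partners. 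There, if $|N(m)| > |O(m)|$ (hence $|O(m)| < n_m$), the $n_m - |O(m)| \ge 1$ slots that $m$ fills with null women under $O$ put at least one null woman into $O(m)$, while the surplus of real $N$-partners forces a genuine woman $\tilde{w} \in N(m) \setminus O(m)$. Since $m$ has gained only worse matches, he must prefer that null woman (a member of $O(m)$) over $\tilde{w}$ (a member of $N(m) \setminus O(m)$), contradicting the fact that $m$ ranks every acceptable real woman---$\tilde{w}$ among them---above all of his nulls. Hence $|N(m)| \le |O(m)|$ for all $m$, and summing yields $\sum_m|N(m)| \le \sum_m|O(m)|$.

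Finally I would combine the two inequalities with the bookkeeping identities to obtain $\sum_w|O(w)| \le \sum_w|N(w)| = \sum_m|N(m)| \le \sum_m|O(m)| = \sum_w|O(w)|$, so that every relation in the chain is an equality. Because the summand inequalities $|O(w)| \le |N(w)|$ (over women) and $|N(m)| \le |O(m)|$ (over men) are individually nonstrict while the corresponding sums agree, each must hold with equality; therefore $|N(p)| = |O(p)|$ for every $p \in W \cup M$, as claimed.
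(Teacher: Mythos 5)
Your proposal is correct and follows essentially the same route as the paper's proof: pass to the padded (null-people) instance from the reduction to rule out $|N(m)|>|O(m)|$ for any man, use weak better-offness for the woman-side inequality $|O(w)|\le|N(w)|$, and finish with the same summation squeeze forcing equality termwise. The only (immaterial) difference is in how the man-side contradiction is extracted: the paper notes that the null woman $\emptyset_{n_m-|O(m)|}^{m}$ would lose her top choice $m$ and so fail to be weakly better-off, violating part (a), whereas you invoke part (\ref{only-sad-men}) together with the fact that every acceptable real woman outranks $m$'s nulls on his list.
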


\begin{proof}
Observe that for each person $p \in W \cup M$, their old
(resp.\ new) matches under the reduction are their old (resp.\ new)
matches before the reduction, ``padded'' by as many people as needed from the
top of $\emptyset_1^p,...,\emptyset_{n_p}^p$ to fulfill their quota of
$n_p$ matches. Assume, for contradiction, that there exists a man $m \in M$
such that (before the reduction) $|N(m)|>|O(m)|$. Let $o=|O(m)|$ (before
the reduction), then under the reduction, $m$ is matched with
$\emptyset_{n_p-o}^p$ under the original matching, but not under the new
matching. Recall that $\emptyset_{n_p-o}$ prefers $p$ the most, so she
is not weakly better-off --- a contradiction.
We have thus shown that for each man $m \in M$, $|N(m)| \le |O(m)|$,
and since all women are weakly better-off, then for each woman $w \in W$,
$|N(w)| \ge |O(w)|$, thus we have
\[
\sum_{w \in W}N(w) \ge \sum_{w \in W}O(w) = \sum_{m \in M}O(m) \ge
\sum_{m \in M}N(m) = \sum_{w \in W}N(w)
\]
which yields that all members of this inequality are actually equal.
Since for every woman $w \in W$, $|N(w)| \ge |O(w)|$, (res.\ for
every man $m \in M$, $|N(m)| \le |O(m)|$,) and since summing
over all women (resp.\ all men) we get an equality,
it follows that the equality holds for each woman (resp.\ each man) separately.
\end{proof}

It should be noted that in \cite{Gale-Sotomayor-remarks}, Corollary
\ref{unchanged-match-sizes} is proved using an entirely different approach
for the special case of the monogamous scenario where the new match is stable,
and in \cite{Roth-Sotomayor-book}, it is proved also for the polygamous
scenario where the new match is stable. Corollary \ref{unchanged-match-sizes}
generalizes these results, as it does not require stability of the new match.

\begin{cor}
Under the conditions of Theorem \ref{monogamous}, if $|L|=1$ and the
lying woman is better-off, then so is some innocent woman too.
\end{cor}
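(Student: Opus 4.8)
The plan is to combine part~(\ref{no-sad-woman}) of Theorem~\ref{monogamous} with an elementary counting argument, exploiting the fact that in the monogamous setting each person's true preference is a strict \emph{total} order. First I would verify that the hypotheses of Theorem~\ref{monogamous} are in force: the sole member of $L$, being better-off, is in particular not worse-off, so ``no lying woman is worse-off'' holds. Hence part~(\ref{no-sad-woman}) applies, and \emph{no} woman is worse-off. Write $l$ for the unique member of $L$; since $l$ is better-off she strictly prefers $N(l)$ to $O(l)$, so in particular $N(l) \neq O(l)$.

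The crux is a bijection argument. Both $O$ and $N$ are matchings, i.e.\ bijections between the equal-sized sets $W$ and $M$. Consider the set $D = \{w \in W : N(w) \neq O(w)\}$ on which the two matchings disagree. I claim $D$ cannot be a singleton: if $O$ and $N$ agreed on all of $W \setminus \{w_0\}$, then, being injective, they would map $W \setminus \{w_0\}$ onto one and the same $(|W|-1)$-element subset $S \subseteq M$, forcing both to send $w_0$ to the unique man of $M \setminus S$, so they would agree at $w_0$ as well. Thus $|D| \neq 1$. Since $l \in D$ by the previous paragraph, we have $D \neq \emptyset$, hence $|D| \geq 2$, so there exists some $w \in D$ with $w \neq l$.

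Finally I would identify this $w$ as the desired woman. Being distinct from the unique liar $l$, she is innocent; by part~(\ref{no-sad-woman}) she is not worse-off; and since $w \in D$ her match has changed. Because her true preference is a strict total order, ``not worse-off'' together with $N(w) \neq O(w)$ forces her to strictly prefer $N(w)$ over $O(w)$, i.e.\ $w$ is better-off, completing the proof. I do not expect a serious obstacle here: the only point demanding care is the parity/bijection observation that two bijections between equal finite sets cannot differ in exactly one place (equivalently, a permutation has no single non-fixed point), together with the use of totality of the order to upgrade ``unchanged-or-better'' to ``strictly better.'' Notably, none of the detailed rejecter/$B(w)$ machinery of the general proof is needed, nor even the preceding corollary on worse-off men.
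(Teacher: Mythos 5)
Your proof is correct, but it finds the second affected woman by a different route than the paper does. The paper's proof is constructive: it follows the liar $l$ to her new husband $N(l)$, notes that his number of matches is unchanged (Corollary \ref{unchanged-match-sizes}, trivial in the monogamous case), and concludes that his original wife $O(N(l))$ --- who is distinct from $l$ since $N(l)\neq O(l)$ --- has had her match changed; she is then better-off by the same ``not worse-off plus changed match implies strictly better-off'' step you use. You instead argue non-constructively that two bijections between equal finite sets cannot disagree at exactly one point, so the disagreement set, containing $l$, must contain a second woman. Both arguments are sound and both rest on part~(\ref{no-sad-woman}) of Theorem~\ref{monogamous} together with totality of the true preference order. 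What the paper's version buys is that it transfers verbatim to the polygamous/blacklist setting (where the matchings are no longer bijections and the permutation-parity fact is unavailable, so one needs Corollary \ref{unchanged-match-sizes} to see that the displaced man loses some original partner); your version buys a slightly cleaner and more self-contained argument in the strictly monogamous case, needing nothing beyond the bijectivity of the two matchings.
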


\begin{proof}
Since the liar is better-off, then she is newly matched with some man
to whom she was not originally matched, and since the number of matches
for this man remained the same, then he is no longer matched with some
woman to whom he was originally matched, and hence the set of matches
for that woman has changed, and thus, since she is weakly better-off, she is
better-off.
\end{proof}

\section{Summary and Open Problems}
We have shown throughout this paper that even in very general scenarios,
a form of sisterhood exists in the Gale-Shapley matching algorithm, both
in terms of not harming each other, and in terms of not helping any man.

If one were to define the concepts of ``weakly worse-off'' and ``having gained
only better matches'' along the lines of definitions \ref{weakly-better-off} and
\ref{gain-only-worse-matches}, then one would find that gaining only worse
matches is stronger than (i.e. a special case of)
being weakly worse-off and gaining only better
matches is stronger case of being better-off. Thus,
it follows that in the scenarios discussed in this paper, in a sense,
the minimum possible damage to a man is greater than the minimum possible
gain for a woman.

It is interesting to check whether, under various
conditions, the overall damage to men is, in any sense, usually
greater that the overall gain for women (and thus resulting in damage to the
entire population as a whole, which contrasts the sisterhood which exists
amongst the women).

The example given in this paper (and other examples given in the first
author's undergraduate thesis, upon which this paper is based) had to
be crafted very delicately. It is interesting to check whether,
when the orders of preference are determined by a random model, lies
can be very beneficial (either for the lying women, and for their innocent
colleagues), or whether the utility gained from such a lie is usually
relatively small. Furthermore, in many of these examples, and in many
examples specified in the literature, the preferences of every person
differ greatly from those of each of their colleagues. In the
real-world scenario of colleges and applicants, however, it is reasonable
to expect the preferences of most applicants to be similar, and the same
applies to the preferences of most colleges. There will always be differences,
but it seems that they are likely to be local, such as a permutation on
colleges which follow each other in the order of preference. It seems
unlikely, for example, for a certain college to be highly rated by half of
the applicants and poorly raged by the rest. (All this applies to the
preferences of colleges as well, naturally.) It is interesting to try and
build a probabilistic model along these lines, that matches the observed
behaviors by medical interns and colleges throughout the years, and to check
the questions raised above under such a model. In addition, since in the
real world the knowledge of each player is not full, it can be interesting
to check whether, given only knowledge of a statistical model for the
preferences of the rest of the players (and possibly full knowledge of
the preferences of just a few players), there exist strategies (for
individual players, or for small sets of players) which are expected to be
better for these players than telling the truth.
It can also be interesting to check what happens if such strategies are
simultaneously applied by more than one conspiring set of players.
\vspace{5mm}\newline\noindent
{\bf \large Acknowledgments: } We would like to thank Sergiu Hart for
useful conversations and for providing us with references to the literature.

\end{document}